\documentclass[11pt]{article}

\usepackage[margin=1in]{geometry}
\usepackage{amsmath,amssymb,amsthm}
\usepackage{algorithm}
\usepackage{algpseudocode}
\usepackage{graphicx}
\usepackage{hyperref}
\usepackage{cleveref}
\usepackage{booktabs}
\usepackage{tikz}
\usepackage{xcolor}
\usepackage{enumitem}
\usepackage{pgfplots}
\usepackage{float}
\usepackage{placeins}
\pgfplotsset{compat=1.18}
\usetikzlibrary{positioning,calc,arrows.meta,patterns,decorations.pathreplacing}

\newtheorem{theorem}{Theorem}[section]
\newtheorem{lemma}[theorem]{Lemma}
\newtheorem{proposition}[theorem]{Proposition}
\newtheorem{corollary}[theorem]{Corollary}
\newtheorem{definition}[theorem]{Definition}
\newtheorem{remark}[theorem]{Remark}
\newtheorem{example}[theorem]{Example}

\crefname{definition}{Definition}{Definitions}
\Crefname{definition}{Definition}{Definitions}
\crefname{remark}{Remark}{Remarks}
\Crefname{remark}{Remark}{Remarks}
\crefname{lemma}{Lemma}{Lemmas}
\Crefname{lemma}{Lemma}{Lemmas}
\crefname{proposition}{Proposition}{Propositions}
\Crefname{proposition}{Proposition}{Propositions}
\crefname{corollary}{Corollary}{Corollaries}
\Crefname{corollary}{Corollary}{Corollaries}
\crefname{example}{Example}{Examples}
\Crefname{example}{Example}{Examples}

\newcommand{\FF}{\mathcal{F}}

\newcommand{\NN}{\mathcal{N}}

\newcommand{\im}{\mathrm{im}}
\newcommand{\rk}{\mathrm{rk}}

\newcommand{\R}{\mathbb{R}}

\title{%
  Incremental Sheaf Cohomology on Cellular Complexes:\\
  $O(1)$-in-$n$ Lazy Edit Processing under Bounded Local Geometry
}

\author{
  Jason L.\ Volk \\
  Invariant Research \\
  \texttt{jason@invariant.pro}
}

\date{}

\begin{document}
\maketitle

\begin{abstract}
We present an algorithmic framework for incremental maintenance of
first sheaf cohomology $H^1(X; \FF)$ on dynamically evolving
1-dimensional cellular complexes equipped with finite-dimensional
cellular sheaves. Following the standard convention in
dynamic-algorithm analysis, we distinguish \emph{update} time
(the cost to ingest one edit) from \emph{query} time (the cost to
report $\dim H^1$ on demand). The classical computation requires
$O(n^3)$ time; when the complex evolves with a stream of $m$
edits, full recomputation after each edit costs $O(m n^3)$.

Under a \emph{bounded local geometry}
assumption (bounded cell size $v_{\max}$, bounded stalk dimension
$d$, and bounded nerve degree $D$), each edit (vertex insertion,
edge insertion, restriction map update) affects only a bounded set
of local coboundary blocks. The algorithm processes lazy
streaming edits in $O(d^2)$ \emph{update} time, independent of
the total complex size $n$, deferring local eigensolves and
global assembly to synchronization points (\textsc{Flush}).
At synchronization, the \emph{query} cost is $O(n)$ in the
implemented full-traversal assembly path, or $O(1)$-in-$n$ with
an incremental \v{C}ech certificate (described but not yet
implemented). In both paths, the maintained state agrees exactly
with batch recomputation at every synchronization point;
we observe zero measured drift through $V = 5 \times 10^6$.

We also give an amortized
$O(|E|)$ streaming construction for the cellular decomposition,
prove that the nerve degree $D$ is bounded for graph families
with bounded maximum degree (\Cref{prop:bounded-D}), and
establish a degree-independent bound
$D \leq v_{\max} \cdot (K - 1)$ via a vertex multiplicity cap
$\mu(v) \leq K$ enforced by the construction
(\Cref{cor:mu-cap-D}), guaranteeing bounded local geometry
for all graph topologies including scale-free networks. The
invariant the construction enforces is the multiplicity cap
itself ($\mu_{\max} = 3$ in all experiments); together with the
cell-size bound $v_{\max}$ it governs the per-edit update cost,
while the nerve degree, which can be large on scale-free
graphs, enters only the deferred global assembly
(\Cref{rem:nerve-pressure}).
We also present an adversarial algebraic-RAM barrier arguing that
unpartitioned non-trivial sheaves ($d \geq 2$, non-identity
restriction maps) do not admit the same locality. Experiments on
Barabasi-Albert graphs with up to $5 \times 10^6$ vertices and
$1.5 \times 10^7$ streaming edits show 35 $\mu$s median lazy
per-edit update latency (excluding flush). Exact synchronization
costs are reported separately in \Cref{sec:experiments}.
\end{abstract}

\section{Introduction}\label{sec:intro}

Cellular sheaves are a mathematical framework that assigns vector
spaces (stalks) to the cells of a cell complex and linear maps
(restriction maps) to the incidence relations between cells. The
cohomology groups of a cellular sheaf, computed via the kernel and
image of coboundary operators on cochain spaces, encode global
consistency obstructions: nonzero first cohomology $H^1(X; \FF)$
detects the presence of \emph{structural contradictions} in the data
assigned to the sheaf.

This detection capability has found applications in sensor network
coverage \cite{ghrist2008barcodes, robinson2014topological}, opinion
dynamics \cite{hansen2020opinion}, distributed optimization
\cite{barbero2022sheaf}, neural network architectures
\cite{bodnar2022neural, hansen2020sheafnn}, and knowledge graph
consistency verification \cite{curry2014sheaves}. In all these
settings, the underlying complex is not static: edges arrive, vertices
are inserted, and the data on the sheaf (restriction maps encoding
relationships between entities) evolves over time.

Yet the standard computation of $H^1$ requires forming the coboundary
matrix $\delta^0: C^0(X;\FF) \to C^1(X;\FF)$ and computing its rank
via singular value decomposition (SVD), Gaussian elimination, or
equivalent matrix factorization. For a complex with $n$ cells (vertices
plus edges), the coboundary matrix has $O(n)$ rows and columns (each
of dimension equal to the stalk dimension $d$), and its factorization
costs $O(n^3 d^3)$. When the complex evolves with a stream of $m$
edits, the naive approach of full recomputation after each edit incurs
$O(m n^3 d^3)$ total cost, which is prohibitive for large-scale dynamic
applications.

\subsection{Contributions}

We make the following contributions:

\begin{enumerate}[leftmargin=2em]
\item \textbf{Locality Lemma (\Cref{lem:locality}).}
We prove that for a partitioned cellular complex with bounded cell
size $v_{\max}$, bounded stalk dimension $d$, and bounded nerve
degree $D$, a single edit changes the local coboundary data of at
most 2 cover elements. This formalizes the intuition that ``local
edits have local data effects'' under bounded geometry; recovery of
the global $\dim H^1$ from the updated local data requires
Mayer-Vietoris assembly.

\item \textbf{Incremental Maintenance Theorem (\Cref{thm:incremental}).}
We show that each incremental edit can be processed in
$O(v_{\max}^3 \cdot d^3)$ time, which is $O(1)$ with respect to the
total complex size $n$ when $v_{\max}$ and $d$ are treated as constants
(as they are in practice). This replaces $O(n^3)$ per-edit
recomputation with $O(1)$-in-$n$ lazy edit ingestion, with exact
global assembly deferred to synchronization (\Cref{sec:assembly}).
In dynamic-algorithm terms this is $O(1)$-in-$n$ \emph{update} time;
the \emph{query} time (cost to report $\dim H^1$ on demand) is set by
the assembly path and is treated separately in
\Cref{rem:update-query}.

\item \textbf{Zero-Drift Theorem (\Cref{thm:drift}).}
We prove that the incrementally maintained $H^1$ agrees with
the batch-assembled $H^1$ of the partitioned sheaf model at every
synchronization point ($\textsc{Flush}$), yielding zero measured
drift. Between flushes, the dirty set may contain stale cells;
exactness is restored upon synchronization.

\item \textbf{Streaming Construction (\Cref{thm:streaming}).}
We describe an algorithm that constructs the cellular decomposition
from a raw edge stream, creating cells on demand and splitting them
on overflow, with $O(|E|)$ total time and amortized $O(1)$ per edge.
The amortized split cost is established via a credit
argument (\Cref{thm:split}).

\item \textbf{Adversarial Barrier Argument (\Cref{thm:lower}).}
We discuss why the partition structure appears necessary: for
non-trivial sheaves (stalk dimension $d \geq 2$ with non-identity
restriction maps), we give an adversarial argument in the algebraic
RAM model suggesting that any algorithm maintaining exact $H^1$ on an
unpartitioned complex must perform $\Omega(n)$ work per edit.
Combined with the upper bound, this suggests a characterization:
for non-trivial sheaves, $O(1)$-in-$n$ maintenance requires
bounded local geometry.

\item \textbf{Empirical Validation (\Cref{sec:experiments}).}
We report benchmarks on complexes with up to $V = 5 \times 10^6$
vertices and $E = 1.5 \times 10^7$ edges, demonstrating 35
$\mu$s median \emph{lazy} per-edit cost (excluding flush) that is
\emph{independent of $n$}, with zero measured drift at
synchronization points, verified by full independent recomputation
through $V = 5 \times 10^6$. Exact synchronization costs (eigensolve +
assembly) are reported separately.

\item \textbf{Contradiction Localization (\Cref{sec:localization}).}
We show that a single structural contradiction injected into an
otherwise consistent sheaf at $V = 5 \times 10^6$ is detected by
recomputing one cell out of 25{,}473 (0.0039\% of the complex), with
the global section count collapsing from 8 to 1 and no global
recomputation, producing a deterministic signed receipt. This is the
locality property of \Cref{lem:locality} applied to detection rather
than maintenance.
\end{enumerate}

\subsection{Relation to Prior Work}

The literature on incremental algebraic topology includes work on
persistent homology \cite{cohen2006vines, dey2014computing,
edelsbrunner2010computational}, where the filtration structure permits
efficient updates to Betti numbers as simplices are added. However,
these algorithms exploit the total order of the filtration and are
specific to simplicial homology; they do not directly apply to sheaf
cohomology, where the additional structure of stalks and restriction
maps introduces a richer (and more expensive) algebraic framework.

Dynamic graph algorithms \cite{eppstein1997sparsification, italiano2006fully}
maintain various properties (connectivity, shortest paths, spanning
trees) under edge insertions and deletions in sublinear amortized
time. Our work can be seen as extending this paradigm to a
\emph{cohomological} invariant on a \emph{sheaf-decorated} graph.

Recent work on sheaf neural networks \cite{bodnar2022neural,
hansen2020sheafnn} has brought sheaf-theoretic methods into the machine
learning community, but these approaches treat the sheaf structure
as a fixed architectural choice rather than a dynamically maintained
invariant. The Hodge Laplacian computations in these networks are
$O(n^2)$ to $O(n^3)$ and are performed once at initialization; our
contribution shows how to maintain the cohomological information
incrementally as the underlying data changes.

Incremental matrix factorization and low-rank updates
\cite{brand2006fast, bunch1978rank} provide algebraic machinery for
updating SVD and eigendecompositions under rank-one perturbations.
While these tools are related, they operate on the global coboundary
matrix and achieve $O(n^2)$ per update at best. Our approach avoids
the global matrix entirely by exploiting the partition structure to
localize the computation.

\section{Preliminaries}\label{sec:prelim}

\subsection{Cell Complexes and Cellular Sheaves}

\begin{definition}[Cell Complex]
A \emph{cell complex} $X$ is a finite collection of cells of
various dimensions. We restrict attention to 1-dimensional cell
complexes (graphs): $X$ consists of a finite set of 0-cells
(vertices) $X_0$ and a finite set of 1-cells (edges) $X_1$, with
an incidence relation $\sigma \leq \tau$ when a vertex $\sigma$ is
a face of an edge $\tau$.
\end{definition}

\begin{definition}[Cellular Sheaf]
A \emph{cellular sheaf} $\FF$ on a cell complex $X$ assigns:
\begin{enumerate}[leftmargin=2em]
  \item To each cell $\sigma \in X$, a finite-dimensional real
        vector space $\FF(\sigma)$ called the \emph{stalk} at
        $\sigma$, with $\dim \FF(\sigma) = d_\sigma$.
  \item To each incidence relation $\sigma \leq \tau$, a linear
        map $\FF_{\sigma \leq \tau}: \FF(\sigma) \to \FF(\tau)$
        called a \emph{restriction map}.
\end{enumerate}
When all vertex stalks have dimension $d$ and all edge stalks have
dimension $d$, we say the sheaf has \emph{uniform stalk dimension}
$d$.
\end{definition}

\begin{definition}[Cochain Spaces and Coboundary Operator]
The \emph{$k$-cochain space} is
\[
  C^k(X; \FF) = \bigoplus_{\sigma \in X_k} \FF(\sigma).
\]
For a 1-complex, the \emph{coboundary operator}
$\delta^0: C^0(X;\FF) \to C^1(X;\FF)$ is defined on a 0-cochain
$x = (x_v)_{v \in X_0}$ by
\[
  (\delta^0 x)_e = \FF_{v_+ \leq e}(x_{v_+}) - \FF_{v_- \leq e}(x_{v_-}),
\]
where $v_+, v_-$ are the two vertices incident to edge $e$
(with a fixed but arbitrary orientation convention).
\end{definition}

\begin{definition}[Sheaf Cohomology]
The \emph{first sheaf cohomology group} is
\[
  H^1(X; \FF) = C^1(X;\FF) \,/\, \im(\delta^0).
\]
Its dimension $\dim H^1(X;\FF)$ counts the number of independent
global consistency obstructions in the sheaf data. For a
1-dimensional complex there are no 2-cells, so the next coboundary
operator $\delta^1: C^1 \to C^2$ vanishes identically; consequently
$H^1 = \ker \delta^1 / \im \delta^0 = C^1 / \im \delta^0 =
\operatorname{coker} \delta^0$, and $\dim H^1$ reduces to the single
rank computation $\dim C^1 - \rk(\delta^0)$. This reduction is
specific to the 1-dimensional setting; its consequences for
higher-dimensional complexes are discussed in
\Cref{sec:discussion}.
\end{definition}

\begin{definition}[Sheaf Laplacian]
The \emph{sheaf Laplacian} is
$L_\FF = (\delta^0)^\top \delta^0 : C^0(X;\FF) \to C^0(X;\FF)$.
This is the \emph{0-Laplacian} $L_0$. The kernel of $L_\FF$ is
the space of \emph{global sections} $H^0(X;\FF)$, and since
$\rk(L_\FF) = \rk(\delta^0)$ (the nonzero eigenvalues of $L_\FF$
correspond to the nonzero singular values of $\delta^0$), the
Hodge theorem for cellular sheaves gives
$\dim H^1(X;\FF) = \dim C^1(X;\FF) - \rk(\delta^0)$.
Thus $\dim H^1$ is obtained from the eigendecomposition of $L_0$
without explicitly forming or factoring the 1-Laplacian
$L_1 = \delta^0 (\delta^0)^\top$.
The second smallest eigenvalue $\lambda_2(L_\FF)$, the
\emph{sheaf spectral gap}, quantifies the degree of
near-inconsistency in the sheaf data. While $\lambda_2$
could in principle be maintained incrementally via similar
locality arguments, this paper focuses exclusively on the
topological invariant $\dim H^1$.
\end{definition}

\subsection{Classical Complexity}

\begin{proposition}[Classical Complexity]\label{prop:classical}
For a cellular sheaf with $|X_0| = V$ vertices, $|X_1| = E$ edges,
and uniform stalk dimension $d$, the coboundary matrix
$\delta^0 \in \R^{Ed \times Vd}$ has dimensions
$O(n d) \times O(n d)$ where $n = V + E$. Computing
$\dim H^1 = Ed - \rk(\delta^0)$ via SVD requires $O(n^3 d^3)$
time, or $O(n^3)$ when $d$ is a constant.
\end{proposition}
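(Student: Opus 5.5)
The plan is to read off the matrix dimensions directly from the definition of $\delta^0$, then account for the cost of computing its rank via a dense SVD, and finally check that the rank gives $\dim H^1$ through the identity already established in the definition of sheaf cohomology.

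\textbf{Dimensions.} By the definition of the cochain spaces with uniform stalk dimension $d$, we have $C^0(X;\FF) = \bigoplus_{v \in X_0} \FF(v) \cong \R^{Vd}$ and $C^1(X;\FF) \cong \R^{Ed}$. So $\delta^0$ is represented by an $Ed \times Vd$ matrix whose rows and columns are indexed blockwise by edges and vertices. The formula $(\delta^0 x)_e = \FF_{v_+\le e}x_{v_+} - \FF_{v_-\le e}x_{v_-}$ places exactly two $d\times d$ blocks in each block row. Since $V, E \le n = V+E$, both dimensions are $O(nd)$.

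\textbf{Correctness of the formula.} Because $X$ is $1$-dimensional, $\delta^1 = 0$ and $H^1 = \operatorname{coker}\delta^0$. Hence $\dim H^1 = \dim C^1 - \rk(\delta^0) = Ed - \rk(\delta^0)$, as recorded in the definition of sheaf cohomology. Assembling the matrix costs $O(E d^2) = O(nd^2)$, and subtracting is $O(1)$, so the rank computation dominates.

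\textbf{Rank via SVD.} I will invoke the standard fact that a dense SVD of an $M\times N$ matrix costs $O(MN\min(M,N))$ arithmetic operations; for instance, Golub--Kahan bidiagonalization followed by the implicit QR sweep. The rank is the number of singular values above a tolerance, which takes an additional $O(\min(M,N))$ to count. With $M = Ed$ and $N = Vd$ this gives $O((nd)^3) = O(n^3d^3)$, which is $O(n^3)$ for constant $d$. The same bound holds for Gaussian elimination or for eigendecomposition of the $Vd\times Vd$ Laplacian $L_\FF$, since forming $L_\FF$ costs at most $O(n^3d^3)$.

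\textbf{Main obstacle.} The only delicate point is being clear that this is an upper bound for the dense method rather than a lower bound. Sparsity, since there are only $2E$ nonzero blocks, could be exploited by other algorithms, but the proposition claims only the cost of the classical SVD route. I would also state explicitly that the cost is measured in the real-RAM arithmetic model with a rank tolerance, to avoid exact-arithmetic bit-complexity issues.
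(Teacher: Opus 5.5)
Your proposal is correct and follows the paper's proof: both apply the standard dense SVD cost $O(MN\min(M,N)) = O(\min(M^2N, MN^2))$ with $M = Ed$ and $N = Vd$. Your direct observation that $E, V \le n$ gives $O((nd)^3)$ for every graph. This lets you skip the paper's restriction to $E = O(V)$ under bounded degree, a hypothesis the statement itself never imposes.
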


\begin{proof}
The SVD of an $m \times p$ matrix costs $O(\min(m^2 p, m p^2))$.
Here $m = Ed$ and $p = Vd$, giving a general cost of
$O(\min(E^2 V,\, E V^2)\, d^3)$ for arbitrary graphs (the rectangular
case, relevant when $E \gg V$, i.e.\ dense complexes). For graphs
with $E = O(V)$ under bounded degree, $m, p = O(Vd) = O(nd)$, so this
is $O(n^3 d^3)$.
\end{proof}

\subsection{Partitioned Cell Complexes}

\begin{definition}[Cellular Decomposition]\label{def:decomp}
A \emph{cellular decomposition} of a graph $G = (V, E)$ is a
partition $\{V_1, V_2, \ldots, V_k\}$ of $V$ into disjoint subsets
(cells) such that each cell $V_i$ induces a connected subgraph
$G_i = G[V_i]$, together with:
\begin{enumerate}[leftmargin=2em]
  \item A set of \emph{boundary vertices}: each cross-cell edge
        $(u, v)$ with $u \in V_i$ and $v \in V_j$ is assigned to a
        single \emph{host} cell (in the streaming construction,
        whichever of the two endpoints' cells has more spare
        capacity). The far endpoint is duplicated into the host
        cell as a boundary vertex, and the edge is realized inside
        the host. If $V_i$ is the host, a formal copy $v'$ of $v$ is
        added to $V_i$, carrying the same stalk $\FF(v)$, and the
        edge is realized as $(u, v')$ within $V_i$. The non-host
        cell $V_j$ is not modified internally; its own copy of $v$
        is recorded as a boundary vertex so that the shared
        interface is visible from both sides.
  \item A \emph{nerve complex} $\NN$: the abstract simplicial
        complex whose 0-simplices are the cells $\{V_i\}$ and
        whose 1-simplices connect pairs of cells sharing at least
        one boundary vertex. A $p$-simplex $\sigma$ in $\NN$
        corresponds to a $(p+1)$-fold intersection among cells.
  \item \emph{Boundary restriction maps}: for each boundary vertex
        $v$ shared between cells $V_i$ and $V_j$, a linear map
        $\rho_{ij}^v: \FF(v)|_{V_i} \to \FF(v)|_{V_j}$
        encoding the compatibility condition between the two
        cells' stalks at $v$. These maps inherit the Purity Gate
        bound (\Cref{def:purity}).
\end{enumerate}
The decomposition induces a \emph{local coboundary matrix}
$\delta^0_i$ for each cell $V_i$, constructed from the edges and
restriction maps internal to that cell (including edges to
boundary vertices). The local sheaf Laplacian is
$L_i = (\delta^0_i)^\top \delta^0_i$.
\end{definition}

\begin{remark}[Equivalence of the duplicated-boundary model]
\label{rem:equiv}
The boundary-duplication construction produces a
\emph{sheaf-theoretic covering} of $X$: the cells $\{V_i\}$ with
their boundary vertices form an open cover of $X$ in the
Alexandrov topology on the face poset, and the boundary
restriction maps encode the cocycle condition on overlaps.
The Mayer-Vietoris sequence for this cover
\cite{curry2014sheaves, ghrist2014elementary} recovers the
cohomology of the original (non-duplicated) complex $X$.
Concretely, the duplicated-boundary complex $\tilde{X}$
(with formal copies of boundary vertices) is not used as an
independent replacement for $X$; it is an implementation device
that represents the cover and its overlap data. In general
$\dim H^1(\tilde{X}; \tilde{\FF}) \neq \dim H^1(X; \FF)$, since
duplicating boundary vertices changes the topology of the
underlying complex (cf.\ \Cref{rem:c4}). The Mayer-Vietoris
assembly recovers the original $\dim H^1(X; \FF)$ from the local
cell data and boundary maps precisely by imposing the
compatibility (cocycle) constraints on overlaps; it is the
assembly, not $\tilde{X}$ itself, that carries the equivalence.
\end{remark}

\begin{definition}[Bounded Local Geometry]\label{def:blg}
A cellular decomposition has \emph{bounded local geometry} with
parameters $(v_{\max}, d, D)$ if:
\begin{enumerate}[leftmargin=2em]
  \item Each cell contains at most $v_{\max}$ vertices (including
        boundary duplicates).
  \item All stalks have dimension at most $d$.
  \item Each cell is adjacent to at most $D$ other cells in the
        nerve complex $\NN$.
\end{enumerate}
All three parameters are independent of $n = |V|$.
\end{definition}

\subsection{Computational Model}\label{sec:model}

We work in the \emph{algebraic RAM model}: the machine has
random-access memory over $\R$ (or a finite-precision
approximation thereof), each arithmetic operation ($+$, $-$,
$\times$, $\div$) on a pair of real numbers costs $O(1)$, and
each comparison costs $O(1)$. Hash table operations (insert,
lookup, delete) cost $O(1)$ amortized. Memory words hold real
numbers or $O(\log n)$-bit integers.

Under this model, multiplying two $d \times d$ matrices costs
$O(d^3)$, computing the eigendecomposition of a symmetric
$m \times m$ matrix costs $O(m^3)$, and a rank computation on an
$m \times p$ matrix costs $O(\min(m^2 p, m p^2))$.

All complexity bounds in this paper are stated in terms of $n$
(total complex size), $v_{\max}$ (cell size bound), $d$ (stalk
dimension), and $D$ (nerve degree bound). When we write ``$O(1)$
in $n$,'' we mean that the cost is bounded by a function of
$v_{\max}$, $d$, and $D$ alone, with no dependence on $n$.

\section{Main Results}\label{sec:main}

We now state and prove the main results. Throughout, we assume a
cellular decomposition with bounded local geometry $(v_{\max}, d, D)$
as in \Cref{def:blg}.

The algorithm maintains $\dim H^1(X; \FF)$ as defined on the
original (non-duplicated) complex, not a local proxy or
approximation. At each synchronization point (\textsc{Flush}),
the global dimension is recovered from local cell data via
Mayer-Vietoris assembly over the nerve complex
(\Cref{thm:gluing} in \Cref{sec:assembly}). The equivalence
between the duplicated-boundary model used by the algorithm and
the original sheaf cohomology is established in
\Cref{rem:equiv}. Between flushes, local cell data may be stale,
but the dirty set tracks all cells requiring recomputation; the
staleness is fully resolved at the next flush.

\subsection{The Locality Lemma}

\begin{lemma}[Locality of Cohomological Impact]\label{lem:locality}
Let $X$ be a cell complex with cellular decomposition
$\{V_1, \ldots, V_k\}$ having bounded local geometry
$(v_{\max}, d, D)$. Let $\FF$ be a cellular sheaf on $X$, and let
$\FF'$ be a sheaf that differs from $\FF$ by a single edit of one
of the following types:
\begin{enumerate}[leftmargin=2em]
  \item[(i)] \textbf{Intra-cell edge insertion}: adding an edge
    between two vertices within the same cell $V_i$.
  \item[(ii)] \textbf{Cross-cell edge insertion}: adding an edge
    between vertices in distinct cells $V_i, V_j$, hosted by one of
    them, with the far endpoint duplicated into the host cell as a
    boundary vertex.
  \item[(iii)] \textbf{Vertex insertion}: adding a new vertex to
    an existing cell $V_i$ with one incident edge.
  \item[(iv)] \textbf{Restriction map update}: replacing the
    restriction maps on an existing edge within cell $V_i$.
\end{enumerate}
Then the edit changes the local coboundary data of exactly one
cover element, the host cell, in all four cases. In case (ii) a
second cell is additionally affected, but not through its local
coboundary: the new boundary restriction map (and the new
boundary-vertex status of the shared vertex) couples the host and
the non-host cell, so the non-host cell's contribution to the
global cohomology can change even though its own coboundary
$\delta^0_j$ is unchanged. Consequently at most two cells require
recomputation before the next global assembly step.

Specifically, let $S \subseteq \{1, \ldots, k\}$ denote the set of
cell indices whose local coboundary matrix $\delta^0$ changes, and
let $R \supseteq S$ denote the set of cells that must be recomputed
before assembly. Then:
\begin{enumerate}[leftmargin=2em]
  \item $|S| = 1$ in all cases (i)--(iv): only the host cell's local
    coboundary changes.
  \item $|R| = 1$ for cases (i), (iii), and (iv); and $|R| = 2$ for
    case (ii), where $R = S \cup \{j\}$ adds the non-host cell
    coupled through the new boundary restriction map.
\end{enumerate}
For all cells $V_\ell$ with $\ell \notin S$, the local coboundary
$\delta^0_\ell$ is unchanged. Recovery of the global
$\dim H^1(X; \FF)$ from the updated local data requires
Mayer-Vietoris assembly over the nerve complex (\Cref{thm:gluing}),
and it is through that assembly step that the case-(ii) boundary map
affects the non-host cell.
\end{lemma}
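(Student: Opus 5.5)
The proof will be a direct case analysis on the four edit types. It rests on one structural fact from \Cref{def:decomp}: the local coboundary $\delta^0_\ell$ is assembled \emph{only} from the edges realized inside $V_\ell$, the vertices of $V_\ell$ (including boundary duplicates), and the restriction maps on those incidences. The first step is to make this dependency explicit. Write $\delta^0_\ell$ as a block matrix whose block rows are indexed by the edges hosted in $V_\ell$ and whose block columns are indexed by the vertices of $V_\ell$. The block in row $e$ and column $v$ is $\pm\FF_{v \leq e}$ when $v$ is incident to $e$, and zero otherwise. Consequently $\delta^0_\ell$ is a function of the tuple (hosted edges of $V_\ell$, vertices of $V_\ell$, restriction maps on hosted incidences). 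It changes only if some entry of this tuple changes.

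The second step checks, for each edit type, which cells have their tuple modified.
\begin{enumerate}[leftmargin=2em]
  \item[(i)] An intra-cell edge insertion adds one block row to $\delta^0_i$ and touches no other cell.
  \item[(iii)] A vertex insertion adds one block column and one block row to $\delta^0_i$ only.
  \item[(iv)] A restriction map update replaces two blocks in one block row of $\delta^0_i$ only.
  \item[(ii)] Let $V_i$ be the host. By the hosting rule of \Cref{def:decomp}, the edge and the formal copy $v'$ are realized inside $V_i$, so $\delta^0_i$ gains one block row and one block column. The non-host cell $V_j$ is stated there to be "not modified internally." Its vertex set and hosted edges are unchanged; only a boundary flag on $v$ is set, and that flag does not enter $\delta^0_j$. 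Hence $\delta^0_j$ is unchanged.
\end{enumerate}
In every case, therefore, $S=\{i\}$ and $|S|=1$.

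The third step identifies $R$. The recomputation set consists of the cells whose \emph{assembly inputs} change. These inputs are the local coboundary together with the overlap data attached to each cell: its boundary vertices and the boundary maps $\rho^v_{ij}$ it participates in. In cases (i), (iii) and (iv) no boundary vertex or boundary map is created. So only $V_i$'s inputs change, and $R=S$. In case (ii) a new boundary restriction map $\rho^v_{ij}$ is created, and $v$ acquires boundary status in $V_j$. This changes $V_j$'s overlap data, and hence its role in the Mayer-Vietoris assembly of \Cref{thm:gluing} (cf.\ \Cref{rem:equiv}), even though $\delta^0_j$ is fixed. No third cell is touched, because a single edge has two endpoints lying in at most two cells. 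Thus $R=\{i,j\}$. The upper bound $|R|\le 2$ is the substantive claim. The exact equality $|R|=2$ only requires that $V_j$'s overlap data genuinely changes, which holds because $\rho^v_{ij}$ is a new object.

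The main obstacle is the non-host claim in case (ii). I expect it to be a matter of pinning down definitions rather than of mathematics. Everything depends on reading \Cref{def:decomp} as saying that boundary-vertex records and boundary maps are stored as overlap data, separate from $\delta^0_j$. I will state this convention explicitly at the start. If the convention instead had $V_j$ also receive a duplicate column, then $|S|$ would become $2$, and the lemma's stronger claim $|S|=1$ would fail. Bounded local geometry (\Cref{def:blg}) is not needed for the counting itself. It matters only in remarks about the cost of recomputing the cells in $R$.
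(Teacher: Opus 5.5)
Your proposal follows the paper's proof essentially step for step and is correct at the paper's level of rigor: the same four-case check shows that only the host's block rows and columns of $\delta^0$ change, so $S=\{i\}$, and in case (ii) the non-host cell $V_j$ enters $R$ only through the new boundary map $\rho^v_{ij}$ and the new boundary status of $v$, while $\delta^0_j$ itself is untouched; your explicit description of $\delta^0_\ell$ as a function of the hosted edges, the cell's vertices and their restriction maps simply makes precise what the paper uses implicitly. The one soft spot, which the paper shares by merely asserting that no $V_\ell$ with $\ell\notin\{i,j\}$ is affected, is your reason for excluding a third cell: under the multiplicity cap the far endpoint $v$ may already have a copy in some $V_m$ with $m\notin\{i,j\}$, and after the edit $V_m$ also shares $v$ with the host, so the count $|R|=2$ tacitly assumes either that $v$ has no other copy or that only $\rho^v_{ij}$ is installed.
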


\begin{proof}
We consider each case separately.

\textbf{Case (i): Intra-cell edge insertion.}
Let $e = (u, v)$ be a new edge with $u, v \in V_i$. The global
coboundary matrix $\delta^0$ is a block matrix indexed by edges
(rows) and vertices (columns). Adding edge $e$ appends one block
row to $\delta^0$, with nonzero entries only in the columns
corresponding to $u$ and $v$, both of which belong to cell $V_i$.
No other cell's block of $\delta^0$ is affected. Hence the local
coboundary $\delta^0_j$ for $j \neq i$ is unchanged.

\textbf{Case (ii): Cross-cell edge insertion.}
Let $e = (u, v)$ with $u \in V_i$, $v \in V_j$, $i \neq j$, and let
$V_i$ be the host cell. The algorithm duplicates $v$ into $V_i$ as a
boundary vertex $v'$, adds the edge $(u, v')$ within $V_i$, and
installs a boundary restriction map coupling $V_i$ and $V_j$ at the
shared vertex. Only $\delta^0_i$ changes (one new edge and one new
boundary vertex in cell $i$); the non-host cell $V_j$ receives no new
interior edge or vertex, so its local coboundary $\delta^0_j$ is
unchanged, giving $|S| = 1$. However, the boundary restriction map
is inter-cell data: it enters the Mayer-Vietoris assembly
differential on nerve edge $(i, j)$ (\Cref{def:assembly-complex})
and is consulted when the global computation resolves the cohomology
contribution attributed to $V_j$. That contribution can therefore
change with no edit to the local sheaf of $V_j$, so $V_j$ must also
be recomputed: $R = \{i, j\}$, $|R| = 2$. No cell $V_\ell$ with
$\ell \notin \{i, j\}$ is affected.

\textbf{Case (iii): Vertex insertion.}
Adding a new vertex $w$ to cell $V_i$ with one incident edge
$(w, u)$ for $u \in V_i$ appends one block row and one block
column to $\delta^0_i$. No other cell is affected. $|S| = 1$.

\textbf{Case (iv): Restriction map update.}
Replacing $\FF_{u \leq e}$ and $\FF_{v \leq e}$ on an existing
edge $e = (u, v)$ within cell $V_i$ modifies the entries of
$\delta^0_i$ in the block row for $e$. No other cell's coboundary
matrix is affected. $|S| = 1$.
\end{proof}

\begin{remark}[Pathological topologies]
The bounded local geometry assumption is essential. In an
unpartitioned complex (equivalently, a single cell containing
all $n$ vertices), every edit potentially affects the entire
coboundary matrix, and no locality can be exploited. Our algorithm
gains its advantage precisely from the cellular decomposition
structure, which localizes each edit's impact to $O(v_{\max})$
vertices rather than $O(n)$.
\end{remark}

\subsection{The Incremental Maintenance Theorem}

\begin{theorem}[Incremental Maintenance]\label{thm:incremental}
Let $X$ be a cell complex with cellular decomposition having bounded
local geometry $(v_{\max}, d, D)$, and consider a single edit of type
(i)--(iv) from \Cref{lem:locality}. The algorithm
$\textsc{IncrementalUpdate}$ satisfies:
\begin{enumerate}[leftmargin=2em]
  \item \emph{(Local update.)} The affected local coboundary data can
    be updated, and the rank of at most 2 affected cells recomputed,
    in $O(v_{\max}^3 \cdot d^3)$ time.
  \item \emph{(Certified eager assembly.)} If the algorithm
    additionally maintains an incremental rank certificate for the
    nerve-level Mayer-Vietoris differential, then the corresponding
    global update of $\dim H^1(X; \FF)$ costs $O(D^3 \cdot d^3)$,
    for a total exact eager update of
    $O(v_{\max}^3 d^3 + D^3 d^3)$ per edit.
  \item \emph{(Implemented lazy path.)} The implementation evaluated
    in \Cref{sec:experiments} uses the lazy edit path of
    \Cref{thm:lazy} (cost $O(d^2)$ per edit, with exact global
    $\dim H^1$ restored at \textsc{Flush}), and performs global
    assembly as a full nerve traversal at each synchronization point
    rather than via the certificate of case~2.
\end{enumerate}
Since $v_{\max}$, $d$, and $D$ are constants independent of
$n = |V|$, both the local update and the certified eager update are
$O(1)$ per edit with respect to $n$; the lazy ingestion cost is
likewise $O(1)$ in $n$, while the full-traversal assembly at
\textsc{Flush} is $O(n)$ per synchronization point
(\Cref{prop:assembly}).
\end{theorem}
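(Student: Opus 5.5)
The plan is to reduce everything to \Cref{lem:locality} and then count arithmetic in the algebraic RAM model of \Cref{sec:model}, one item at a time.

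\textbf{Item 1 (local update).} By \Cref{lem:locality}, an edit of type (i)--(iv) changes the local coboundary of exactly one host cell $V_i$, and $|R|\le 2$ cells must be recomputed. The cost has three parts.
\begin{enumerate}[leftmargin=2em]
\item \emph{Locating the cell.} Finding the affected cell(s) is $O(1)$ amortized via the hash table from vertex to cell.
\item \emph{Patching $\delta^0_i$.} Each edit appends at most one block row and one block column, or overwrites one block row. Since a block row has at most two nonzero $d\times d$ blocks, this costs $O(d^2)$.
\item \emph{Recomputing the rank.} For each $\ell\in R$, the local matrix $\delta^0_\ell$ has at most $v_{\max}d$ columns. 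Its number of rows is at most $\binom{v_{\max}}{2}d$, because edges inside a cell join vertices of that cell, boundary duplicates included. I would form $L_\ell=(\delta^0_\ell)^\top\delta^0_\ell$, of size $v_{\max}d\times v_{\max}d$, and eigendecompose it for $O(v_{\max}^3d^3)$.
\end{enumerate}
The main obstacle in item 1 is forming $L_\ell$ without paying for all $O(v_{\max}^2)$ edges times $d^3$. Naively that costs $O(v_{\max}^2 d^3)$, which is still within $O(v_{\max}^3d^3)$, so the stated bound holds even if we rebuild from scratch. Better, each edge contributes a rank-$\le d$ update touching four $d\times d$ blocks of $L_\ell$, so $L_\ell$ can be maintained incrementally in $O(d^3)$ per edit. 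Summing over $|R|\le 2$ cells gives $O(v_{\max}^3d^3)$.

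\textbf{Item 2 (certified eager assembly).} I would argue conditionally on the hypothesized certificate for the nerve-level Mayer--Vietoris differential of \Cref{def:assembly-complex} (\Cref{thm:gluing}). An edit changes only the local data of the cells in $R$ and, in case (ii), one boundary map on nerve edge $(i,j)$. So only the block rows and columns of the assembly differential incident to $i$ and $j$ change.

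By bounded nerve degree, each cell meets at most $D$ nerve edges. Each such block has dimension bounded by $d$ times the overlap size, and I would bound the overlap size by a constant absorbed into the stated bound. The perturbed submatrix then has $O(Dd)$ rows and columns. Re-establishing its rank against the certificate, by a local re-elimination or eigensolve, costs $O(D^3d^3)$. Adding the result of item 1 gives $O(v_{\max}^3d^3+D^3d^3)$.

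I expect the only delicate point to be justifying that a local rank change on the affected rows and columns determines the global rank change. This is exactly what the certificate hypothesis is meant to supply, so I would state it as an assumption rather than prove it.

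\textbf{Item 3 (implemented lazy path).} This is descriptive. I would cite \Cref{thm:lazy} for the $O(d^2)$ per-edit ingestion cost: write the block entries and mark at most two cells dirty. I would cite \Cref{prop:assembly} for the $O(n)$ full-traversal cost at \textsc{Flush}.

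\textbf{Closing remark.} Since $v_{\max},d,D$ do not depend on $n$, all per-edit bounds in items 1 and 2 and the lazy ingestion cost are $O(1)$ in $n$, in the sense of \Cref{sec:model}.
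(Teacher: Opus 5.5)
Your proposal matches the paper's proof step for step. The Locality Lemma limits the work to at most two cells, each handled by an eigensolve of an $O(v_{\max}d)\times O(v_{\max}d)$ local Laplacian; item~2 follows, as in the paper, by asserting an $O(Dd)$-dimensional rank update under the certificate hypothesis; and item~3 comes directly from \Cref{thm:lazy} and \Cref{prop:assembly}.

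Your accumulation of $L_\ell$ edge by edge, in four $d\times d$ blocks, is slightly more careful than the paper, which only says forming $\delta^0_i$ is ``polynomial in $v_{\max}$ and $d$'' (a dense product $(\delta^0_i)^\top\delta^0_i$ would cost $O(v_{\max}^4d^3)$, exceeding the stated bound). In item~2, however, treating overlap size as an absorbable constant is an assumption the paper also makes without stating it: two cells can share up to $v_{\max}$ boundary vertices, each contributing $d$ rows to $\partial_0$.
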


\begin{proof}
By \Cref{lem:locality}, each edit affects at most 2 cells.
For each affected cell $V_i$, the local coboundary matrix
$\delta^0_i \in \R^{E_i d \times V_i d}$ has $O(E_i d)$ rows and
$O(V_i d)$ columns, with $V_i \leq v_{\max}$ and
$E_i \leq \binom{v_{\max}}{2} = O(v_{\max}^2)$; constructing it
from the sparse local incidence data costs time polynomial in
$v_{\max}$ and $d$, hence $O(1)$ in $n$. The rank of $\delta^0_i$
is obtained from the local $0$-Laplacian
$L_i = (\delta^0_i)^\top \delta^0_i \in \R^{V_i d \times V_i d}$,
which is $O(v_{\max} d) \times O(v_{\max} d)$; its dense
eigendecomposition costs $O((v_{\max} d)^3) = O(v_{\max}^3 d^3)$.
This establishes case~1: the local update over at most 2 cells
costs $2 \cdot O(v_{\max}^3 d^3) = O(v_{\max}^3 d^3)$.

For case~2, recovery of the global $\dim H^1(X; \FF)$ proceeds via
Mayer-Vietoris assembly over the nerve complex $\NN$
(\Cref{sec:assembly}). An edit affecting at most 2 cells changes
at most $O(D)$ rows of the \v{C}ech differential $\partial_0$.
With a maintained rank factorization of the \v{C}ech differentials,
the affected Schur complement has dimension $O(D \cdot d)$ and the
rank update costs $O(D^3 d^3)$, giving a total exact eager update of
$O(v_{\max}^3 d^3 + D^3 d^3)$, independent of $n$.

Case~3 is immediate: the lazy edit cost is $O(d^2)$ by
\Cref{thm:lazy}, and the full-traversal assembly invoked at
\textsc{Flush} is analyzed in \Cref{prop:assembly}.
\end{proof}

\begin{corollary}[Total Streaming Cost]
\label{cor:total}
Processing a stream of $m$ edits on a complex with $n$ total
vertices costs $O(m \cdot (v_{\max}^3 d^3 + D^3 d^3))$ total time
in the certified eager model (case~2 of
\Cref{thm:incremental}), compared to $O(m \cdot n^3 d^3)$ for
full recomputation after each edit. The speedup factor in the
certified eager model is
\[
  \frac{n^3 d^3}{v_{\max}^3 d^3 + D^3 d^3}
  \approx \frac{n^3}{v_{\max}^3}.
\]
For a knowledge graph with $n = 10^6$ and
$v_{\max} = 500$ (the empirical $D$ satisfies $D < v_{\max}$,
so the $v_{\max}^3$ term dominates the denominator and the $D$
term is negligible), this factor is approximately
$8 \times 10^9$. This speedup applies to the certified eager
path, which is not yet implemented; the implemented lazy path
achieves $O(1)$-in-$n$ \emph{update} time but incurs $O(n)$
\emph{query} time at each flush (\Cref{rem:update-query}).
\end{corollary}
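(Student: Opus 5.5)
The corollary follows by summing the per-edit bound of \Cref{thm:incremental} (case~2) over the stream and comparing it with the cost of batch recomputation from \Cref{prop:classical}. The first step is to verify that the per-edit certified eager cost $O(v_{\max}^3 d^3 + D^3 d^3)$ holds uniformly over the whole stream. The constants must not degrade as edits accumulate. I would argue this by noting that \Cref{def:blg} bounds $v_{\max}$, $d$ and $D$ for every intermediate complex: the streaming construction splits cells on overflow, so $v_{\max}$ is an invariant, and \Cref{cor:mu-cap-D} gives $D \le v_{\max}(K-1)$ at every step. With that, the total is a sum of $m$ identical bounds, giving $O(m(v_{\max}^3 d^3 + D^3 d^3))$. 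Any split work triggered by overflow is absorbed using the amortized $O(1)$ credit argument of \Cref{thm:split}, so it adds only $O(m)$ with constants depending on $v_{\max}$ and $d$.

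The baseline comes from \Cref{prop:classical}. After the $t$-th edit the complex has at most $n$ cells, so one SVD costs $O(n^3 d^3)$. Summing over $m$ edits gives $O(m n^3 d^3)$, an upper bound that matches the paper's comparison. The speedup ratio is then the quotient of the two per-edit costs, in which $d^3$ cancels. Using $D < v_{\max}$, which holds empirically and also follows from \Cref{cor:mu-cap-D} whenever $K-1 < 1$ fails only mildly, the denominator is at most $2v_{\max}^3$, so the ratio is $\Theta(n^3/v_{\max}^3)$. The numerical instance follows by plugging in values: $(10^6)^3/(500)^3 = 10^{18}/1.25\times 10^8 = 8\times 10^9$.

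The main obstacle is the caveat in the statement itself. Case~2 of \Cref{thm:incremental} presupposes a maintained rank certificate for the \v{C}ech differential, and that certificate must stay valid as edits accumulate. I would state the corollary conditionally on that certificate and cite \Cref{rem:update-query} for the lazy path. For the lazy path the corresponding total is $O(md^2)$ for updates plus $O(n)$ per flush by \Cref{prop:assembly}.
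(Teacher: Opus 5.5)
Your route is the paper's: the corollary is \Cref{thm:incremental}, case~2, summed over $m$ edits and set against $m$ applications of \Cref{prop:classical}. The $d^3$ cancels in the quotient, and $n^3/v_{\max}^3 = 10^{18}/(1.25\times 10^8) = 8\times 10^9$ gives the numerical instance. The paper gives no further argument; it treats bounded local geometry as a standing hypothesis over the whole stream.

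One step is wrong, though: $D < v_{\max}$ does not follow from \Cref{cor:mu-cap-D}. With the implemented cap $K = 3$, that corollary gives only $D \le v_{\max}(K-1) = 2v_{\max}$. Your clause about $K-1<1$ ``failing only mildly'' is not an argument. The paper's own measurement $D_{\max} = 992 > 500 = v_{\max}$ at $V = 5\times 10^6$ (\Cref{rem:graph-families}) shows the inequality can fail in practice.

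The fix is to use the bound you actually have, $v_{\max}^3 + D^3 \le 9v_{\max}^3$. This still gives a speedup of $\Theta(n^3/v_{\max}^3)$. However, the figure $8\times 10^9$ is then correct only up to a constant factor. It presumes the $D^3$ term is negligible, which is an empirical assumption, not a consequence of anything proved. If $D$ reaches its cap $2v_{\max} = 1000$ at $n = 10^6$, the ratio drops to about $10^{18}/(1.125\times 10^9) \approx 8.9\times 10^8$. Even granting $D < v_{\max}$, your own denominator bound $2v_{\max}^3$ only places the figure between $4\times 10^9$ and $8\times 10^9$. You should state it as an estimate conditional on $D \ll v_{\max}$.

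A smaller point concerns your uniformity argument. \Cref{thm:split} charges a split only for the Fiedler bisection and the two child eigensolves. It does not cover refactoring the \v{C}ech certificate after the cell set and nerve change. That cost also amortizes over the at least $v_{\max}/2$ insertions between splits, but you must say so if you want to derive uniformity rather than assume it as the paper does.
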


\subsection{Lazy Evaluation and Amortized Analysis}

The algorithm supports two execution modes:

\begin{definition}[Eager and Lazy Modes]
In \emph{eager mode}, the local eigensolve and global assembly are
performed immediately after each edit. In \emph{lazy mode}, the edit
marks the host cell as \emph{dirty} and defers both the eigensolve
and assembly to a batch $\textsc{Flush}$ operation. Only the dirty
set membership test (hash set insertion, $O(1)$) is performed per
edit.
\end{definition}

\begin{theorem}[Lazy Amortized Cost]\label{thm:lazy}
In lazy mode, the per-edit cost is $O(d^2)$ (constant time for
constant $d$, independent of all other parameters). The
$\textsc{Flush}$ operation processes all $k_d$ dirty cells in
$O(k_d \cdot v_{\max}^3 d^3 + k_d \cdot D^3 \cdot d^3)$ time,
where the first term covers local eigensolves and the second
covers global assembly. When $\textsc{Flush}$ is called after
every $B$ edits, the amortized cost per edit is
\[
  O(d^2) + \frac{O(k_d \cdot (v_{\max}^3 d^3 + D^3 d^3))}{B}.
\]
When $B = \Omega(k_d)$, this simplifies to
$O(v_{\max}^3 d^3 + D^3 d^3)$ amortized per edit, which is $O(1)$
in $n$.
\end{theorem}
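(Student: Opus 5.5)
The plan has three parts: the per-edit cost in lazy mode, the \textsc{Flush} cost, and an amortization over batches of $B$ edits.

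\textbf{Per-edit cost.} I will go through the four edit types of \Cref{lem:locality} and count only what a lazy edit actually does. First, it writes the new or modified edge data into the host cell's sparse local storage. For types (i), (iii) and (iv) this means storing at most two $d \times d$ restriction maps $\FF_{u \le e}$ and $\FF_{v \le e}$, which costs $O(d^2)$ word writes. For type (ii) it also means recording the boundary-vertex copy $v'$ and the boundary map $\rho^v_{ij}$, which is again $O(d^2)$. Second, it inserts the indices of the at most two cells in $R$ into the dirty hash set, which is $O(1)$ amortized under the model of \Cref{sec:model}. No eigensolve, no rank computation and no nerve traversal happens here, so the per-edit cost is $O(d^2)$. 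This bound is independent of $v_{\max}$, $D$ and $n$, because the edit touches a constant number of $d\times d$ blocks and never reads the rest of the cell.

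\textbf{Flush cost.} I will bound the \textsc{Flush} cost by charging each dirty cell separately. For each of the $k_d$ dirty cells, rebuild $\delta^0_i$ and $L_i$ and eigensolve $L_i$. By the argument in case~1 of \Cref{thm:incremental}, this costs $O(v_{\max}^3 d^3)$ per cell. For the assembly term, I will argue that the \v{C}ech/Mayer--Vietoris differential changes only in rows incident to dirty cells. Each dirty cell has at most $D$ nerve neighbours, so it contributes $O(D)$ rows with $d$-dimensional blocks. The maintained rank factorization is then updated through a Schur complement of dimension $O(Dd)$, at cost $O(D^3 d^3)$ per dirty cell, exactly as in case~2 of \Cref{thm:incremental}. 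Summing over dirty cells gives the stated bound $O(k_d(v_{\max}^3 d^3 + D^3 d^3))$.

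\textbf{Amortization and main obstacle.} If \textsc{Flush} is called every $B$ edits, I add $B$ lazy edits at $O(d^2)$ each to one flush and divide the total by $B$, which gives the displayed formula. Each edit dirties at most two cells, so $k_d \le 2B$, and therefore $k_d/B = O(1)$. This is the substantive content of the hypothesis $B = \Omega(k_d)$: it is automatically satisfied, and in fact $k_d \le 2B$ is what we need. The amortized cost then collapses to $O(d^2 + v_{\max}^3 d^3 + D^3 d^3) = O(v_{\max}^3 d^3 + D^3 d^3)$.

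I expect the main obstacle to be the assembly term. The $O(D^3d^3)$ per-dirty-cell charge depends on an incremental certificate that \Cref{thm:incremental} only assumes, whereas the implemented full traversal costs $O(n)$ per flush (\Cref{prop:assembly}). I will therefore state explicitly that the $D^3d^3$ bound is proved in the certified model. For the implemented path, the same amortization gives $O(d^2) + O(n)/B$ per edit, which is $O(1)$ in $n$ only when $B = \Omega(n)$.
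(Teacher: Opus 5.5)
Your proposal is correct and follows the paper's proof essentially step for step. The per-edit cost is $O(d^2)$, coming from restriction-map initialization plus $O(1)$ hash operations; you omit the paper's $O(1)$ vertex-router lookup for case classification, which does not change the bound. Each dirty cell costs an $O(v_{\max}^3 d^3)$ eigensolve, and the $O(D^3 d^3)$ per-dirty-cell assembly charge holds only with a maintained \v{C}ech rank certificate; both you and the paper note that the implemented full traversal costs $O(n)$ per flush. Your explicit observation that each edit dirties at most two cells, so that $k_d \le 2B$ and $B = \Omega(k_d)$ holds automatically, makes the amortization step more careful than the paper's, which leaves it implicit and elsewhere writes $k_d \le B$.
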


\begin{proof}
Each lazy edit performs: (a) a hash lookup to determine case
classification, $O(1)$; (b) graph mutation (add vertex/edge to a
local adjacency list), $O(1)$; (c) restriction map initialization
(copy or reference from a pre-computed pool), $O(d^2)$; and (d)
dirty set insertion, $O(1)$. Total: $O(d^2) = O(1)$ for constant
$d$.

At flush time, each dirty cell undergoes one eigensolve of its
local Laplacian, costing $O(v_{\max}^3 d^3)$. After all dirty
cells are recomputed, the global assembly step recovers
$\dim H^1(X; \FF)$ from the local data and boundary maps via
the Mayer-Vietoris sequence over the nerve complex. With a
maintained rank factorization of the \v{C}ech differentials,
the assembly update costs $O(k_d \cdot D^3 \cdot d^3)$; without
such a certificate, a full nerve traversal is required.
In the implementation evaluated here, assembly is performed as
a full nerve pass at each synchronization point.
\end{proof}

\begin{remark}[Update versus query complexity]\label{rem:update-query}
Following the standard convention in dynamic-algorithm analysis, we
distinguish \emph{update} time (the cost to ingest one edit) from
\emph{query} time (the cost to report $\dim H^1$ on demand). The lazy
per-edit \emph{update} cost is $O(d^2)$, i.e.\ $O(1)$ in $n$. A
\emph{query} forces a \textsc{Flush}, whose cost depends on the
assembly path. The amortized statement of \Cref{thm:lazy} uses the
certificate-path flush cost
$O(k_d \cdot (v_{\max}^3 d^3 + D^3 d^3))$, for which $B = \Omega(k_d)$
suffices to keep the amortized update $O(1)$ in $n$ (since
$k_d \leq B$). In the implementation evaluated here, however, assembly
is a full nerve traversal costing $O(|\NN| \cdot D^2 d^3) = O(n)$ per
flush (\Cref{prop:assembly}); in that regime the $O(n)$ assembly term
amortizes to $O(1)$ per edit only when $B = \Omega(n)$, and a query
issued after every edit costs $O(n)$ per query. In short: with the
full-traversal assembly, the \emph{update} time is $O(1)$ in $n$ while
the \emph{query} time is $O(n)$; reducing query time to $O(1)$ in $n$
requires the incremental \v{C}ech certificate of
\Cref{prop:assembly}. The benchmarks in \Cref{sec:experiments} report
\emph{update} latency (lazy per-edit cost, excluding flush); flush and
query costs are reported separately.
\end{remark}

\subsection{The Zero-Drift Theorem}

We first state the invariant that the algorithm maintains, then
prove that it implies exactness.

\begin{definition}[Dirty-Set Invariant]\label{def:invariant}
Let $\mathcal{D} \subseteq \{1, \ldots, k\}$ denote the dirty set.
The algorithm maintains the following invariant after every edit
and after every flush:
\begin{enumerate}[leftmargin=2em]
  \item[(I1)] For every cell index $i \notin \mathcal{D}$, the
    cached local coboundary matrix $\hat{\delta}^0_i$ equals the
    true coboundary matrix $\delta^0_i$ constructed from the
    current graph and sheaf data of cell $V_i$ ($\hat{\delta}^0_i
    = \delta^0_i$), and no boundary restriction map incident to
    $V_i$ has changed since its cached cohomology contribution was
    last computed.
  \item[(I2)] For every cell index $i \in \mathcal{D}$, the cached
    $\hat{\delta}^0_i$ or the cached cohomology contribution of
    $V_i$ may be stale (it reflects the state at the time of the
    last flush or initialization, not the current state).
  \item[(I3)] The dirty set $\mathcal{D}$ contains every cell whose
    cached data may no longer reflect the current sheaf: every cell
    $i$ with a stale local coboundary
    ($\hat{\delta}^0_i \neq \delta^0_i$), and, for a case-(ii) edit,
    the non-host cell, whose shared vertex has just become a
    boundary vertex with a new boundary restriction map coupling it
    to the host (\Cref{lem:locality}). Equivalently, no cell outside
    $\mathcal{D}$ has either a stale local coboundary or a changed
    incident boundary coupling.
  \end{enumerate}
\end{definition}

\begin{lemma}[Invariant Preservation]\label{lem:invariant}
Each edit of type (i)--(iv) from \Cref{lem:locality} preserves
the dirty-set invariant (\Cref{def:invariant}).
\end{lemma}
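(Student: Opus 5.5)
We argue by induction on the sequence of operations: the invariant holds at initialization and after every \textsc{Flush}, and we show that if (I1)--(I3) hold before a single edit of type (i)--(iv), they hold after it. At initialization every cached $\hat{\delta}^0_i$ is computed from the current data and no boundary map has changed since, so (I1) holds with $\mathcal{D} = \emptyset$; (I2) and (I3) are then vacuous or immediate. After a \textsc{Flush}, every dirty cell is recomputed from current data and $\mathcal{D}$ is cleared. Clean cells were already exact by (I1), so the invariant is re-established. The substance is therefore the single-edit step.

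For the edit step we use \Cref{lem:locality} directly. Write $\mathcal{D}$ and $\mathcal{D}'$ for the dirty set before and after the edit, and $R$ for the recompute set of the lemma. We first check that the lazy edit procedure inserts exactly the cells of $R$ into the dirty set, so that $\mathcal{D}' = \mathcal{D} \cup R$. In cases (i), (iii), (iv) this means the host cell $V_i$ is inserted. In case (ii) it means both the host $V_i$ and the non-host $V_j$ are inserted, since the lazy procedure must mark $V_j$ when it installs the boundary restriction map. Next, fix a cell $\ell \notin \mathcal{D}'$. Then $\ell \notin \mathcal{D}$, so (I1) held for $\ell$ before the edit. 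Since $\ell \notin R \supseteq S$, the lemma gives that $\delta^0_\ell$ is unchanged by the edit. It remains to show that no boundary restriction map incident to $V_\ell$ was created or altered. This follows case by case:
\begin{itemize}[leftmargin=2em]
  \item cases (i), (iii), (iv) touch only data interior to $V_i$;
  \item case (ii) installs a single map $\rho^v_{ij}$, whose endpoints $i, j$ both lie in $R$.
\end{itemize}
Hence (I1) persists for $\ell$. (I2) is a permissive statement about dirty cells and holds trivially. (I3) follows by contraposition: any cell with a stale coboundary or a changed coupling lies in $\mathcal{D}$ (stale from before) or in $R$ (made stale now).

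The main obstacle is the case-(ii) coupling clause. Here one must argue that a new boundary map $\rho^v_{ij}$ perturbs only the cached contributions of $V_i$ and $V_j$, and not those of other cells that already share $v$ as a boundary vertex. If $v$ was previously duplicated into some third cell $V_m$, one might worry that the cocycle data on the overlap $(j,m)$, or on a triple intersection $(i,j,m)$ in $\NN$, changes. I would handle this by noting that \Cref{def:decomp} records boundary maps pairwise: the new map lives only on the nerve edge $(i,j)$. Existing maps $\rho^v_{jm}$ are not modified, so $V_m$'s incident couplings are literally unchanged, which is all that (I1) asks. Any effect on the global $\dim H^1$ via higher \v{C}ech terms is deferred to assembly, and (I1) does not address it. If the paper's assembly instead caches per-cell contributions that depend on triple overlaps, I would enlarge $R$ to include the cells sharing $v$. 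That set has size at most $D$, so the insertion cost stays bounded.
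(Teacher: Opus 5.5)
Your main argument is the paper's own. You split on the edit type and use \Cref{lem:locality} to see that the procedure inserts exactly $R$ into $\mathcal{D}$. You then check that any $\ell$ outside the new dirty set sees neither a coboundary change nor a boundary-map change, so (I1) persists, and note that (I2) is vacuous and (I3) holds because the affected cells are now dirty. The wrapper over initialization and \textsc{Flush} is not part of this lemma (the paper handles it in the proof of \Cref{thm:drift}), but it is harmless.

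The one place you go beyond the paper is the third-cell paragraph, and it does not close the issue it raises. Pairwise recording shows that the \emph{old} map $\rho^v_{jm}$ is untouched. But if $v$ already had a copy in $V_m$, duplicating it into the host creates a \emph{new} overlap $V_i \cap V_m \ni v$. By \Cref{def:decomp} this is a new nerve edge $(i,m)$ incident to $V_m$ (together with the 2-simplex $(i,j,m)$), and it carries its own boundary map $\rho^v_{im}$. (I3) as written counts a newly installed coupling as a change.

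So ``$V_m$'s incident couplings are literally unchanged'' holds only under a convention: that only the host/non-host coupling is stored. That convention is implicit in the case-(ii) description of \Cref{lem:locality} and in \textsc{AddCrossEdge}, which registers one nerve edge and one map. The paper's proof makes the same assumption without comment (``changes neither $\delta^0_\ell$ nor any boundary map incident to $V_\ell$''). Your proof is therefore no weaker than the paper's, but you should cite that convention as the justification rather than claim it follows from pairwise recording.

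Your fallback of dirtying every cell that contains $v$ removes the dependence on the convention. However, it changes the algorithm, so it proves a variant of the stated lemma rather than the lemma itself. Also, by the multiplicity cap it adds at most $K-2$ cells (one, for $K=3$), a much sharper bound than $D$.
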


\begin{proof}
We verify each case:

\textbf{Case (i), (iii), (iv):} The edit modifies $\delta^0_i$ for
exactly one cell $V_i$ (by \Cref{lem:locality}) and changes no
boundary restriction map. The algorithm inserts $i$ into
$\mathcal{D}$. For all $j \neq i$, the edit does not change
$\delta^0_j$ or any boundary map incident to $V_j$, so if
$j \notin \mathcal{D}$ before the edit, condition (I1) continues to
hold for $j$. Condition (I3) is maintained because $i$ is now in
$\mathcal{D}$.

\textbf{Case (ii):} The edit modifies $\delta^0_i$ for the host
cell $V_i$, marks the shared vertex as a boundary vertex in both
cells, and installs a boundary restriction map coupling $V_i$ and
the non-host cell $V_j$. The algorithm inserts both $i$ and $j$
into $\mathcal{D}$ ($R = \{i, j\}$ in \Cref{lem:locality}). The
non-host coboundary $\delta^0_j$ is unchanged, but its incident
boundary coupling changed, so (I3) requires $j \in \mathcal{D}$,
which holds. For all $\ell \notin \{i, j\}$, the edit changes
neither $\delta^0_\ell$ nor any boundary map incident to $V_\ell$,
so (I1) continues to hold for $\ell$. Condition (I3) is maintained
because both affected cells are now in $\mathcal{D}$.

In all cases, (I2) is satisfied trivially for cells in
$\mathcal{D}$, since stale caches are permitted for dirty cells.
\end{proof}

\begin{theorem}[Zero Drift]\label{thm:drift}
Let $X_0, X_1, \ldots, X_m$ be a sequence of cell complexes
obtained by applying edits $e_1, \ldots, e_m$ to an initial
complex $X_0$. Let $h^1_{\text{inc}}(t)$ denote the $H^1$
dimension maintained by the incremental algorithm after
$\textsc{Flush}$ at step $t$, and let $h^1_{\text{batch}}(t)$
denote the $H^1$ dimension obtained by full recomputation of
$H^1(X_t; \FF_t)$ from scratch. Then for all $t$:
\[
  h^1_{\text{inc}}(t) = h^1_{\text{batch}}(t).
\]
The drift $\Delta(t) = h^1_{\text{inc}}(t) - h^1_{\text{batch}}(t)$
is exactly zero for all $t$.
\end{theorem}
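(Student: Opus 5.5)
The proof goes by induction on $t$, using the dirty-set invariant (\Cref{def:invariant}) together with \Cref{lem:invariant}.

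\textbf{Reduction to two claims.} At each synchronization point the algorithm computes $h^1_{\text{inc}}(t)$ by the Mayer-Vietoris assembly of \Cref{thm:gluing}. The inputs are the cached local data $\{\hat{\delta}^0_i\}$ (and their cached cohomology contributions) together with the current boundary restriction maps on the nerve $\NN$. By \Cref{thm:gluing} and \Cref{rem:equiv}, this assembly applied to the \emph{true} local data returns $\dim H^1(X_t;\FF_t)$ on the original, non-duplicated complex. Full batch recomputation returns the same number, whether it uses the global rank formula $\dim C^1 - \rk(\delta^0)$ or batch assembly on the same cover; I would fix $h^1_{\text{batch}}(t)$ as the batch-assembled value, as in the contribution statement. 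It therefore suffices to show two things: (a) immediately after \textsc{Flush}, every cached local datum equals the true one; and (b) the assembly is a deterministic function of these inputs alone. Claim (b) is by construction, since the assembly is an exact rank computation in the algebraic RAM model of \Cref{sec:model}.

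\textbf{Induction.} The base case is $t=0$. After initialization every cell is computed from scratch, so $\mathcal{D}=\emptyset$ and (I1) holds for all $i$. For the inductive step, suppose the invariant holds before the edit $e_t$. \Cref{lem:invariant} shows it still holds afterwards, and \Cref{lem:locality} guarantees that the only cells whose coboundary or incident boundary coupling changed lie in $R \subseteq \mathcal{D}$. Between flushes, several edits simply iterate \Cref{lem:invariant}. By (I3), the dirty set then always contains every cell with stale data. \textsc{Flush} recomputes $\hat{\delta}^0_i := \delta^0_i$, and the corresponding contribution, for each $i\in\mathcal{D}$ from current data, and then empties $\mathcal{D}$. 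Cells outside $\mathcal{D}$ are already correct by (I1). Hence (a) holds, the invariant is re-established with $\mathcal{D}=\emptyset$, and the induction continues.

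\textbf{Main obstacle.} The delicate step is (I3) for case (ii). There the non-host cell's coboundary is unchanged, but its assembly contribution is not. I must check that \emph{every} datum entering the assembly either is read fresh at flush time or belongs to a cell in $\mathcal{D}$. This covers the nerve edges and boundary maps $\rho^v_{ij}$, which by \Cref{lem:locality} are created only in case (ii) and involve only $i,j$. My first approach is to list the assembly inputs from \Cref{def:assembly-complex}, namely local ranks, boundary stalks, and overlap maps, and to attribute each one to a cell index. I would then verify that each of the edit types (i)--(iv) alters only inputs attributed to $R$.

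The second point is exactness of the arithmetic. Zero drift is an exact statement, so ranks must be computed exactly, or with a tolerance that is consistent between the incremental and batch runs. I would state this as an assumption of the algebraic RAM model rather than attempt to prove it. Under that assumption $\Delta(t)=0$ for all $t$.
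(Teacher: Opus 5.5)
Your proposal is correct and follows essentially the same route as the paper: induction over the edit and flush sequence using the Invariant Preservation lemma, then Phase~1 of \textsc{Flush} restoring exact local data on the dirty cells while (I1) covers the clean ones, and finally the Mayer--Vietoris assembly of the Gluing Theorem returning the batch value, with exactness taken as an assumption of the algebraic RAM model. The case-(ii) check you flag as the main obstacle is already handled in the paper by three things: the boundary-map clause of (I1), the non-host clause of (I3), and the fact that the implemented assembly re-reads every boundary map in a full nerve traversal at each flush.
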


\begin{proof}
We proceed by induction on the number of flushes.

\textbf{Base case ($t = 0$):} The initial complex is built by batch
computation, so $h^1_{\text{inc}}(0) = h^1_{\text{batch}}(0)$
trivially. The dirty set is empty, and (I1) holds for all cells.

\textbf{Inductive step:} Assume $h^1_{\text{inc}}(t-1) =
h^1_{\text{batch}}(t-1)$. Between flush $t-1$ and flush $t$,
a set of edits modifies certain cells. Let $\mathcal{D}$ be the
dirty set at flush $t$. By \Cref{lem:invariant}, the dirty-set
invariant holds throughout the edit sequence. The flush operation
performs two phases:

\emph{Phase 1 (Local recomputation):} For each $V_i$ with
$i \in \mathcal{D}$, the local coboundary matrix $\delta^0_i$ is
reconstructed from the current graph and sheaf data, and
$\rk(\delta^0_i)$ is recomputed via exact eigendecomposition of
the local 0-Laplacian
$L_i = (\delta^0_i)^\top \delta^0_i$ (counting nonzero
eigenvalues). The local $\dim H^1_i = \dim C^1_i - \rk(\delta^0_i)$
is then determined. After this phase,
$\hat{\delta}^0_i = \delta^0_i$ for all $i \in \mathcal{D}$.
Combined with invariant (I1) for $i \notin \mathcal{D}$, we have
$\hat{\delta}^0_i = \delta^0_i$ for all $i \in \{1, \ldots, k\}$:
every cell's cached data is now current.

\emph{Phase 2 (Global assembly):} The global $\dim H^1(X; \FF)$
is recovered from the local cell data and inter-cell boundary
restriction maps via the Mayer-Vietoris exact sequence over the
nerve complex $\NN$
\cite{curry2014sheaves, ghrist2014elementary}. This step is
necessary because $\dim H^1(X; \FF)$ is a global invariant that
cannot, in general, be obtained by summing local $H^1$ values:
cohomology classes that span multiple cells (cycles in the nerve
whose restriction maps are collectively inconsistent) are
invisible to any single cell. The Mayer-Vietoris sequence recovers
these cross-cell classes from the boundary data. In the
implementation evaluated here, the assembly pass traverses the
full nerve complex at each flush; incremental assembly
certificates that would restrict the traversal to
dirty-cell neighborhoods are discussed as future work in
\Cref{sec:discussion}.

Since Phase 1 restores exact local data for all cells and Phase 2
applies the standard Mayer-Vietoris sequence, which computes the
global $H^1$ from local data and boundary maps, the result
$h^1_{\text{inc}}(t)$ equals the batch-recomputed value
$h^1_{\text{batch}}(t)$.
\end{proof}

\begin{remark}[Exactness model]\label{rem:exact-model}
\Cref{thm:drift} is stated in the algebraic RAM model of
\Cref{sec:model}, where rank is computed exactly. In
finite-precision arithmetic the local eigensolves determine ranks
relative to a numerical tolerance $\epsilon$; ``zero drift'' then
holds relative to that tolerance, and the Purity Gate
(\Cref{def:purity}) bounds the restriction-map norms that govern
conditioning. The finite-precision behavior over very long edit
streams, and mitigations such as periodic algebraic resets, are
discussed in the Limitations (\Cref{sec:discussion}).
\end{remark}

\begin{remark}[The Case B subtlety]
An earlier version of the algorithm marked only the host cell dirty
for cross-cell edge insertions, violating invariant (I3). This
produced a drift of $-2$ on a $V = 21{,}000$ test case (seed 137),
where two $H^1$ classes in the non-host cell were missed. The
fix, marking both cells dirty, costs approximately 10 nanoseconds
of additional overhead (two hash set insertions) and restores the
invariant. This illustrates that the zero-drift property depends on
the completeness of the dirty set: every cell whose coboundary is
affected must be tracked, including cells coupled through boundary
restriction maps.
\end{remark}

\subsection{Streaming Construction}

\begin{theorem}[Streaming Construction]\label{thm:streaming}
Given an edge stream $e_1, e_2, \ldots, e_m$ over a vertex set $V$
(vertices discovered as they appear as edge endpoints), there exists
an algorithm $\textsc{StreamingBuild}$ that constructs a cellular
decomposition with bounded cell size $v_{\max}$ and stalk dimension
$d$ in $O(m)$ total time, creating cells on demand and splitting
them when they exceed $v_{\max}$ vertices. The construction enforces
a vertex multiplicity cap $\mu(v) \leq K$ (with $K = 3$ in the
implementation), which by \Cref{cor:mu-cap-D} guarantees
$D \leq v_{\max} \cdot (K - 1)$ for all input streams. The full
bounded local geometry condition of \Cref{def:blg} therefore holds
a priori, and the incremental maintenance bounds of
\Cref{thm:incremental} apply to arbitrary graph topologies. The
sheaf cohomology $H^1$ is maintainable throughout the construction
via the incremental algorithm.
\end{theorem}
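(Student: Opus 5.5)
The plan is to make \textsc{StreamingBuild} explicit and then check four things in turn: correctness of the decomposition, the $O(m)$ time bound, the multiplicity cap and its consequence for $D$, and compatibility with incremental maintenance.

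\textbf{The algorithm.} Keep a hash map from each discovered vertex to its primary cell and its current multiplicity $\mu(v)$ (number of cells holding a copy, primary or boundary duplicate). When edge $e_t=(u,v)$ arrives, the handling depends on how many endpoints are new.
\begin{itemize}[leftmargin=2em]
\item If both endpoints are new, open a fresh cell, or place them in a cell with spare capacity.
\item If exactly one endpoint is new, insert it into the other endpoint's cell. This is a type-(iii) edit.
\item If both are known and share a cell, the edge is a type-(i) edit.
\item Otherwise it is a type-(ii) cross-cell edit. Choose as host the endpoint cell with more spare capacity, subject to the far endpoint having $\mu<K$. If both endpoints are already at multiplicity $K$, route the edge into a cell that already contains a copy of one endpoint, so that no new copy is created.
\end{itemize}
Each step costs $O(1)$ hash operations plus $O(d^2)$ to initialize restriction maps. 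That gives $O(1)$ per edge when $d$ is constant, excluding splits.

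\textbf{Splits.} When a cell exceeds $v_{\max}$ vertices, split it into two connected halves of size about $v_{\max}/2$. One way is a BFS order, cut at a prefix, with the cut edges reassigned as cross-cell edges carrying boundary copies. A split costs $O(v_{\max}^2)$, the number of local edges, and it must also respect the multiplicity cap. The cost is charged by the credit argument of \Cref{thm:split}. Each insertion deposits $O(v_{\max})$ credits, and a freshly split cell needs $\Omega(v_{\max})$ insertions before it can split again. The total split cost is therefore $O(m)$, giving amortized $O(1)$ per edge. Connectivity of each $G[V_i]$ is preserved because cells grow only by adjoining neighbours and split along a connected BFS prefix together with its complement. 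The complement's connectivity is delicate; if it fails, split by connected components, which only produces more small cells.

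\textbf{Multiplicity cap and nerve degree.} The cap $\mu(v)\le K$ holds by construction, since a copy is created only when $\mu<K$. \Cref{cor:mu-cap-D} then gives $D\le v_{\max}(K-1)$. Cell size is at most $v_{\max}$ because any overflow triggers a split, and stalk dimension is $d$ by initialization. So \Cref{def:blg} holds a priori, and \Cref{thm:incremental} applies.

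\textbf{Maintaining $H^1$ during the build.} Every construction step is an edit of type (i)--(iv). A split can be modelled as a bounded sequence of such edits on $O(1)$ cells, with all touched cells inserted into the dirty set. By \Cref{lem:invariant} and \Cref{thm:drift}, $H^1$ is then maintained exactly at each flush.

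\textbf{Main obstacle.} The hardest part is the saturated case: both endpoints have multiplicity $K$ and no common cell exists. The edge must still be represented without a new duplicate. The first thing to try is to migrate an existing boundary copy, i.e.\ realize the edge in a cell already containing a copy of $u$ by relocating the copy of $v$. The argument then has to show this stays $O(1)$ amortized and keeps splits from cascading multiplicities above $K$. A fallback is to allow the edge to sit in a dedicated overflow cell. That would need a separate argument that this preserves $v_{\max}$ and the $D$ bound.
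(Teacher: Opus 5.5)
Your time analysis (the four edge cases at $O(d^2)$ each, plus a credit argument for splits) follows the paper's proof and is fine. The gap is the multiplicity cap. Your claim that ``the cap holds by construction, since a copy is created only when $\mu<K$'' is not backed by your algorithm. In the doubly saturated cross-cell case you have no legal move. The same problem arises in splits: cutting edges at a vertex $u$ already in $K$ cells forces either a new copy of $u$, or copies of its neighbours in $u$'s child, which refills that child.

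Neither remedy you suggest can work, because the obstacle is a counting impossibility rather than a technicality. Under the paper's conventions (\Cref{def:decomp}, \Cref{def:blg}), every edge $(u,x)$ is realized in one host cell that contains copies of both endpoints, and cell size counts boundary duplicates. So for a simple graph, each of the $\mu(u)$ cells containing $u$ carries at most $v_{\max}-1$ distinct neighbours of $u$, whence
\[
  \deg(u) \;\le\; \mu(u)\,(v_{\max}-1) \;\le\; K\,(v_{\max}-1).
\]
Two examples fail this bound. One is a star $K_{1,N}$ with $N > K(v_{\max}-1)$. The other is a Barabasi--Albert graph once its maximum degree, which grows like $\sqrt{n}$, passes this threshold. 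Neither admits a decomposition that realizes every edge, respects the cell-size bound, and satisfies $\mu \le K$. With overage $\tau$, replace $v_{\max}$ by $v_{\max}+\tau$. An overflow cell hosting $(u,v)$ must contain copies of both endpoints, so it raises both multiplicities. Migrating copies only reshuffles slots and leaves the count unchanged.

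The paper's own proof does not close this either. It lists the same four cases and relies on the reversal rule of \Cref{rem:mu-cap}. In the doubly saturated case that rule just shifts the $(K+1)$-th copy onto the other endpoint: a Case~B edge has no common cell, so reversal always creates a new copy. Hence the clause ``for all input streams / arbitrary graph topologies'' cannot be proved as stated. At minimum one needs the necessary hypothesis $\Delta \le K(v_{\max}-1)$, and even then an argument that the online rules and the splits actually achieve the cap. Otherwise a guarantee must be relaxed, for example not counting duplicates toward cell size, which forfeits the $O(v_{\max}^3 d^3)$ local cost.

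Two smaller points. First, placing two new endpoints into an arbitrary cell ``with spare capacity'' breaks the connectivity of $G[V_i]$ required by \Cref{def:decomp}; use a fresh seed cell as the paper does. Second, a split is not a sequence of type (i)--(iv) edits, since it replaces a cover element and re-attributes its nerve edges, so \Cref{lem:invariant} does not literally apply. What preserves exactness at \textsc{Flush} is that both children are marked dirty and the full-traversal assembly rebuilds the \v{C}ech differentials from scratch.
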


\begin{proof}
The streaming builder processes each edge by classifying it into
one of four cases:

\textbf{Case A (Intra-cell):} Both endpoints are in a common cell.
Delegated to the incremental updater. Cost: $O(d^2)$ in lazy mode.

\textbf{Case B (Cross-cell):} Endpoints are in disjoint cells.
One endpoint is duplicated into the other cell. Cost: $O(d^2)$ in
lazy mode, plus $O(1)$ for nerve registration.

\textbf{Case C (One new vertex):} One endpoint exists, the other
is new. The new vertex is added to the existing endpoint's cell.
Cost: $O(d^2)$ in lazy mode.

\textbf{Case D (Both new):} Neither endpoint exists. A new seed
cell is created containing just these two vertices and one edge.
Cost: $O(d^2)$ for the minimal cell construction.

Each case is $O(d^2) = O(1)$ for constant $d$. After each insertion,
the builder checks if the host cell exceeds $v_{\max} + \tau$
vertices (where $\tau$ is a soft overage tolerance to prevent
cascade splitting). If so, the cell is split via Fiedler spectral
bisection (for cells up to a threshold size) or BFS balanced
partition (for larger cells).

The amortized cost of splits is $O(v_{\max}^2 d^3)$ per insertion
by \Cref{thm:split} below. Total construction cost:
$O(m \cdot (d^2 + v_{\max}^2 d^3)) = O(m)$ for constant
$v_{\max}$ and $d$.
\end{proof}

\begin{theorem}[Amortized Split Cost]\label{thm:split}
The amortized cost of cell splits during streaming construction
is $O(v_{\max}^2 d^3)$ per edge insertion, which is $O(1)$ in $n$.
\end{theorem}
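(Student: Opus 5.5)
We prove \Cref{thm:split} with a potential (credit) argument. Each edge insertion deposits $c \cdot v_{\max}^2 d^3$ credits on its host cell, for a constant $c$ fixed below. Each split is paid for out of credits accumulated since the cell was last created or split. The argument has three steps.

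\textbf{Cost of one split.} A split is triggered only when a cell first exceeds $v_{\max} + \tau$ vertices, so it acts on at most $v_{\max} + \tau + 1 = O(v_{\max})$ vertices. In the Fiedler branch we form the local Laplacian of dimension $O(v_{\max} d)$ and extract its second eigenvector. A dense eigensolve costs $O(v_{\max}^3 d^3)$. The BFS branch is cheaper: $O(v_{\max}^2)$ for traversal, plus $O(v_{\max}^2 d^2)$ to copy edges and restriction maps into the two halves. In both branches we also reassign boundary duplicates, update nerve adjacencies, and mark both halves dirty. Under the multiplicity cap $\mu(v) \le K$, each moved vertex touches $O(K)$ nerve entries, so this bookkeeping costs $O(v_{\max} K)$. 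The total per split is therefore $O(v_{\max}^3 d^3)$.

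\textbf{Insertions between splits.} The split must produce halves of size at most roughly $(v_{\max}+\tau)/2 + O(1)$. This is guaranteed directly by the balanced BFS partition. For the Fiedler branch we enforce it by taking a median cut of the Fiedler vector rather than a sign cut. A freshly created half then needs at least $v_{\max}/2 + \tau/2 - O(1) = \Omega(v_{\max})$ further vertex arrivals before it can split again.

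Each vertex arrival into a cell is caused by one edge insertion. This holds for a new vertex (Case C or D) and for a boundary duplicate (Case B); intra-cell edges (Case A) add no vertex. Hence each edge insertion contributes at most one vertex to at most one cell. Charging the $O(v_{\max}^3 d^3)$ split cost to the $\Omega(v_{\max})$ insertions that grew the cell gives $O(v_{\max}^2 d^3)$ per insertion. This fixes $c$ and proves the theorem.

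\textbf{Cascades and double charging.} The main obstacle is ensuring that no insertion is charged twice and that a single split does not trigger further splits. Vertices are not created by a split, only redistributed, so an insertion's credit sits in exactly one post-split cell and is spent at most once. Boundary duplicates could, however, be created during a split when a cut edge is re-hosted, and these are not paid for by any insertion. I would bound them by noting that each cut edge creates at most one duplicate per side, and that the $\mu \le K$ cap limits how many such duplicates a vertex can acquire. The soft tolerance $\tau$ must then absorb these extra duplicates. Concretely, I would require $\tau$ to be at least the maximum number of duplicates a split can add to a half, which is $O(v_{\max})$ in the worst case. With that choice both halves remain below the threshold immediately after a split, so no cascade occurs. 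If the worst-case count of cut-edge duplicates is too large for this, I would refine the charging to treat duplicates as deferred arrivals paid by the insertions of the cut edges themselves.
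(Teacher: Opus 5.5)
Your proposal is essentially the paper's own credit argument: deposit $\Theta(c_s/v_{\max}) = \Theta(v_{\max}^2 d^3)$ credits per insertion against a split cost $c_s = O(v_{\max}^3 d^3)$, and use roughly half-size children to force $\Omega(v_{\max})$ vertex-adding insertions between consecutive splits of a cell. The paper simply asserts that each child has at most $v_{\max}/2 + \tau/2$ vertices, so your median-cut and split-time-duplicate discussion goes beyond it; however, enlarging $\tau$ cannot absorb those duplicates in general, since for a dense cell (e.g.\ a clique) any two-way split in the duplicated-boundary model leaves one child containing every vertex, so the child-size bound remains an unproved assumption in both your argument and the paper's, and some restriction excluding dense cells would be needed to justify it.
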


\begin{proof}
Each edge insertion deposits $2 \, c_s / v_{\max}$ credits into
the host cell, where $c_s = O(v_{\max}^3 d^3)$ is the worst-case
cost of a single cell split, comprising the Fiedler bisection of
the graph Laplacian ($O(v_{\max}^3)$) plus sheaf cohomology
eigensolves on the two child cells ($O(v_{\max}^3 d^3)$ each);
the sheaf eigensolve dominates since $d \geq 1$.

\textbf{Per-insertion (no split):} The actual cost is $O(d^2)$.
The credit deposit is $2 \, c_s / v_{\max} = O(v_{\max}^2 d^3)$.
The amortized cost per insertion is therefore
$O(d^2) + O(v_{\max}^2 d^3) = O(v_{\max}^2 d^3)$.

\textbf{Per-split:} A cell is split only when it reaches
$v_{\max} + \tau$ vertices. Since each child of a split has at
most $v_{\max}/2 + \tau/2$ vertices, the cell must have received
at least $v_{\max}/2$ insertions since its creation or last
split. Each insertion deposited $2 \, c_s / v_{\max}$ credits,
so the cell has accumulated at least
\[
  \frac{2 \, c_s}{v_{\max}} \cdot \frac{v_{\max}}{2} = c_s
\]
credits, which pays for the split exactly.

Therefore, the amortized cost per insertion is
$O(v_{\max}^2 d^3)$, which is $O(1)$
in $n$ for constant $v_{\max}$ and $d$.
\end{proof}

\begin{proposition}[Bounded Nerve Degree for Bounded-Degree Graphs]
\label{prop:bounded-D}
Let $G$ be a graph with maximum vertex degree $\Delta$, and let
$\{V_1, \ldots, V_k\}$ be a cellular decomposition with
$|V_i| \leq v_{\max}$ for all $i$. Then the nerve degree of
each cell satisfies
$D(V_i) \leq v_{\max} \cdot \Delta$.
In particular, when both $v_{\max}$ and $\Delta$ are constants
independent of $n$, the full bounded local geometry condition
of \Cref{def:blg} holds with $D = v_{\max} \cdot \Delta$, and
all complexity bounds of \Cref{thm:incremental} apply.
\end{proposition}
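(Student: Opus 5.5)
We will prove the bound by a charging argument: every nerve edge incident to $V_i$ will be traced back to a single edge of $G$ with one endpoint in $V_i$, and then the number of such edges will be counted. By \Cref{def:decomp}, two cells $V_i$ and $V_j$ are adjacent in $\NN$ exactly when they share a boundary vertex. A boundary vertex arises only from a cross-cell edge $(u,v)$ of $G$ with $u \in V_i$ and $v \in V_j$, whose far endpoint is duplicated into the host cell. So the first step is to show that every nerve neighbor $V_j$ of $V_i$ is \emph{witnessed} by at least one cross-cell edge of $G$ joining a vertex of $V_i$ to a vertex of $V_j$. We then fix, for each nerve neighbor $j$, one such witnessing edge, which gives a map $j \mapsto (u_j, v_j)$.

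The second step is to check that this map is injective into the set of edges of $G$ incident to original vertices of $V_i$. Distinct neighbors $j \neq j'$ receive distinct witnessing edges, because $v_j \in V_j$, $v_{j'} \in V_{j'}$, and the $V_j$ are disjoint. Hence
\[
D(V_i) \;\le\; \bigl|\{(u,v)\in E : u\in V_i,\ v\notin V_i\}\bigr| \;\le\; \sum_{u\in V_i}\deg(u) \;\le\; |V_i|\,\Delta \;\le\; v_{\max}\Delta .
\]
The last inequality uses $|V_i| \le v_{\max}$, and this holds even if boundary duplicates are counted among the $v_{\max}$ vertices, since counting only original vertices can only decrease the sum. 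Once $D \le v_{\max}\Delta$ is established, the final statement follows from \Cref{def:blg} together with the hypotheses on $v_{\max}$ and $d$. Then \Cref{thm:incremental} applies verbatim.

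The main obstacle is the first step, namely making the definition of ``sharing a boundary vertex'' precise enough to guarantee a witnessing edge between \emph{original} vertices. A duplicate $v'$ in a host cell $V_i$ is a copy of a vertex $v$ whose original lies in some $V_j$. I would argue that boundary-vertex status is only ever created by a cross-cell edge insertion (case~(ii) of \Cref{lem:locality}), and that this insertion records exactly one pair of cells. Because the duplicate records only one pair, it never creates an adjacency between $V_i$ and a third cell. The one degenerate possibility is that a vertex $v$ is duplicated into several host cells. I would handle this by noting that each duplication is charged to its own distinct edge incident to $v$, so the adjacency it creates is still witnessed by an edge with one endpoint in the host cell, and the counting above goes through unchanged.
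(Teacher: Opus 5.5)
Your counting strategy (bound the nerve neighbors of $V_i$ by cross-cell edges at its vertices, then use $\deg\le\Delta$) is the paper's. The gap is your first step: the claim that every nerve neighbor $V_j$ is witnessed by an edge of $G$ from an original vertex of $V_i$ to one of $V_j$ is false in the paper's nerve, and your treatment of the ``degenerate'' case does not rescue it. Suppose $v$, with home cell $V_j$, is duplicated into two host cells $V_i$ and $V_\ell$. Then $V_i$ and $V_\ell$ both contain $v$, and the paper makes them adjacent in $\NN$. It turns every set of cells sharing a vertex into a simplex (\Cref{rem:nerve-dim}, the proof of \Cref{cor:mu-cap-D}, and \Cref{rem:graph-families}, where a cell containing a hub becomes adjacent to nearly every cell the hub touches). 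So your assertion that a duplication ``never creates an adjacency between $V_i$ and a third cell'' contradicts the construction. For such a co-host neighbor there need be no edge between original vertices of $V_i$ and $V_\ell$. The edge $(w,v)$ with $w\in V_\ell$ that you propose to charge has no endpoint among the original vertices of $V_i$, so it is not in the set you count. Your injectivity argument (far endpoint in $V_j$, cells disjoint) also says nothing about it. In fact the inequality $D(V_i)\le\sum_{u\in V_i}\deg(u)$, summed over original vertices, can fail. Take $V_i=\{u\}$ with $u$ adjacent to $v_1,\dots,v_\Delta$ in distinct cells, all $\Delta$ edges hosted by $V_i$. Give each $v_k$ a further $\Delta-1$ neighbors in distinct singleton cells, each hosting its edge to $v_k$. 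Then $V_i$ is adjacent to $\Delta$ home cells and $\Delta(\Delta-1)$ co-host cells, so $D(V_i)=\Delta^2$, while $\sum_{u\in V_i}\deg(u)=\Delta$.

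The fix is to charge per vertex of $V_i$, counting boundary duplicates as \Cref{def:blg} does, rather than per edge leaving an original vertex. A vertex $v$ lies in its home cell and in at most one host cell per incident edge, hence in at most $1+\deg(v)$ cells. So any copy of $v$ in $V_i$, original or duplicate, links $V_i$ to at most $\deg(v)\le\Delta$ other cells. Taking the union over the at most $v_{\max}$ vertices of $V_i$ gives $D(V_i)\le v_{\max}\Delta$. This per-vertex count is what the paper's proof does, and what the proof of \Cref{cor:mu-cap-D} does with $K-1$ in place of $\Delta$. In particular, drop your remark that restricting to original vertices ``can only decrease the sum'': the duplicates are exactly what carry the co-host adjacencies.
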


\begin{proof}
The nerve degree $D(V_i)$ is the number of distinct cells
adjacent to $V_i$ in the nerve. Two cells are adjacent if and
only if they share at least one boundary vertex. Each boundary
vertex in $V_i$ arises from a cross-cell edge incident to some
vertex $v \in V_i$. Since $v$ has at most $\Delta$ neighbors in
$G$, it contributes at most $\Delta$ cross-cell edges.
Multiple cross-cell edges from the same vertex may connect
$V_i$ to the same neighboring cell (when $v$ has several
neighbors in a single adjacent cell), so $\Delta$ is an upper
bound on the number of \emph{distinct} cells reached from $v$,
not necessarily tight. Summing over all
$|V_i| \leq v_{\max}$ vertices gives at most
$v_{\max} \cdot \Delta$ distinct adjacent cells. In practice
$D \ll v_{\max} \cdot \Delta$ because most edges are intra-cell
and cross-cell edges from a single vertex often land in the
same neighboring cell.
\end{proof}

\begin{corollary}[Degree-Independent Nerve Bound via Multiplicity Cap]
\label{cor:mu-cap-D}
Let $\{V_1, \ldots, V_k\}$ be a cellular decomposition with
$|V_i| \leq v_{\max}$ for all $i$, and suppose the construction
enforces a vertex multiplicity cap $\mu(v) \leq K$ for every
vertex $v$ (i.e., each vertex appears in at most $K$ cells).
Then the nerve degree of each cell satisfies
\[
  D(V_i) \leq v_{\max} \cdot (K - 1),
\]
independently of the maximum vertex degree $\Delta$ of the
underlying graph. In particular, $D = O(1)$ in $n$ for any
graph family, including scale-free graphs with unbounded
$\Delta$. The nerve dimension is at most $K - 1$, so the
assembly complex of \Cref{def:assembly-complex} has at most
$K$ nonzero terms.
\end{corollary}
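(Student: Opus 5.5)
The nerve adjacency of $V_i$ is produced only by shared vertices, so I would bound $D(V_i)$ by counting, for each vertex of $V_i$, how many \emph{other} cells can contain a copy of it. The argument parallels the proof of \Cref{prop:bounded-D}, with the multiplicity cap replacing the degree bound $\Delta$. By \Cref{def:decomp}, $V_j$ is adjacent to $V_i$ in $\NN$ exactly when the two cells share at least one (boundary) vertex. In the duplicated-boundary model this means some original vertex $v$ has a copy in both $V_i$ and $V_j$. The first step is therefore to write the neighbour set as
\[
  N(V_i) \;=\; \bigcup_{v \in V_i} \{\, j \neq i : v \text{ has a copy in } V_j \,\}.
\]

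The second step is to bound each term. Because $\mu(v) \leq K$ counts all cells containing $v$, including $V_i$ itself, each set in the union has at most $K-1$ elements. There are at most $v_{\max}$ indices $v$ in the union, since by \Cref{def:blg} the cell size bound already counts boundary duplicates. A union bound then gives $D(V_i) \leq v_{\max}(K-1)$. Nothing here involves the number of neighbours of $v$ in $G$, so the bound is independent of $\Delta$. Since $v_{\max}$ and $K$ are constants, this yields $D = O(1)$ in $n$ for every graph family, scale-free graphs included.

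The third step is the claim about nerve dimension. A $p$-simplex of $\NN$ corresponds to a nonempty $(p+1)$-fold intersection $V_{i_0} \cap \cdots \cap V_{i_p}$. Any vertex in that intersection lies in $p+1$ cells, so $p+1 \leq \mu(v) \leq K$, which gives $p \leq K-1$. The \v{C}ech-type assembly complex of \Cref{def:assembly-complex} is indexed by nerve simplices of dimensions $0,\dots,K-1$, so it has at most $K$ nonzero terms.

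The main obstacle is making precise what counts as a shared vertex and what $\mu$ counts. The formal copy $v'$ in a host cell has to be identified with the original $v$. Otherwise $v'$ would appear to be a new vertex with multiplicity $1$, and the cap would say nothing about adjacency. I would first fix this identification explicitly: $\mu(v)$ is the number of cells holding either $v$ or one of its formal copies. I would then check that the nerve adjacency used by the construction, including the non-host cell's recorded boundary copy, never links two cells except through such a shared original vertex. A second point to verify is that higher nerve simplices are defined by common vertices and not merely by pairwise adjacency. If the nerve were instead taken as a flag complex, the dimension bound would need a separate argument, so I would state that convention explicitly.
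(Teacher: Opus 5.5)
Your proposal is correct and follows the paper's proof essentially step for step: a union bound over the at most $v_{\max}$ vertices of $V_i$, each contributing at most $K-1$ other cells, and the observation that a nonempty $(p+1)$-fold intersection forces $p+1 \leq \mu(v) \leq K$. Both of your caveats are already settled by \Cref{def:decomp}. It identifies the formal copy $v'$ with $v$, with the non-host cell recording its own copy of $v$ as a boundary vertex. It also defines the nerve as the \v{C}ech nerve, whose $p$-simplices are $(p+1)$-fold intersections, not as a flag complex.
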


\begin{proof}
Each vertex $v \in V_i$ appears in at most $\mu(v) \leq K$
cells (including $V_i$ itself), so $v$ connects $V_i$ to at
most $K - 1$ other cells through boundary duplication.
Multiple boundary vertices from $V_i$ may connect to the
same neighboring cell, so the number of \emph{distinct}
adjacent cells is at most $|V_i| \cdot (K - 1) \leq
v_{\max} \cdot (K - 1)$. Since $v_{\max}$ and $K$ are
constants independent of $n$, this bound is $O(1)$ in $n$.

For the nerve dimension: a $(p+1)$-fold intersection among
cells $V_{i_0} \cap \cdots \cap V_{i_p}$ is nonempty only
if some vertex $v$ appears in all $p+1$ cells, requiring
$\mu(v) \geq p+1$. Since $\mu(v) \leq K$, we need
$p + 1 \leq K$, so $p \leq K - 1$. The nerve has no
simplices of dimension $\geq K$; with $K = 3$, the
highest-dimensional simplices are 2-simplices (triangles).
\end{proof}

\begin{remark}[Role of the multiplicity cap]
\label{rem:mu-cap}
The multiplicity cap $\mu(v) \leq K$ is enforced by the
streaming construction: when a cross-cell edge would duplicate
a vertex $v$ into its $(K+1)$-th cell, the duplication direction
is reversed so that the \emph{other} endpoint is duplicated
instead (into a cell where $v$ already resides). This reversal
preserves edge locality while respecting the cap, at the cost
of concentrating boundary duplications on lower-degree vertices.
The batch partition path enforces the same cap during boundary
expansion. Both paths guarantee $\mu_{\max} \leq K = 3$ for
all graph topologies, as confirmed empirically in
\Cref{sec:experiments}.

\Cref{cor:mu-cap-D} supersedes \Cref{prop:bounded-D} for
graph families with unbounded degree: the degree-independent
bound $D \leq v_{\max} \cdot (K - 1)$ applies to
Barabasi--Albert, power-law, and arbitrary graphs without
requiring bounded $\Delta$. \Cref{prop:bounded-D} remains
tighter for bounded-degree families where $\Delta < K - 1$.
\end{remark}

\begin{remark}[Applicability to common graph families]
\label{rem:graph-families}
\Cref{prop:bounded-D} applies directly to graph families with
bounded maximum degree: $d$-regular graphs, lattice graphs,
bounded-degree expanders, and Watts--Strogatz small-world graphs
with fixed rewiring parameter $k$. For Barabasi--Albert graphs
$\mathrm{BA}(n, m)$, the maximum degree grows as
$O(\sqrt{n})$, so \Cref{prop:bounded-D} gives
$D = O(v_{\max} \sqrt{n})$, which is not $O(1)$. However,
\Cref{cor:mu-cap-D} provides a degree-independent bound:
$D \leq v_{\max} \cdot (K - 1) = 1000$ with $K = 3$ and
$v_{\max} = 500$, which is $O(1)$ in $n$ regardless of $\Delta$.

The \emph{empirical} nerve degree depends on the cell count and
partition topology, and on scale-free graphs it can be large.
At $V = 21{,}000$ (203 cells), a few hub-containing cells exhibit
near-complete nerve adjacency: the maximum observed nerve degree
was $D_{\max} = 197$, while the median was $178$.
At $V = 5 \times 10^6$ (62{,}303 cells, 14{,}999{,}994 edges),
the nerve degree reaches $D_{\max} = 992$ with a median of
$441$, approaching but not exceeding the theoretical bound
$v_{\max}(K-1) = 1000$ of \Cref{cor:mu-cap-D}.
This growth is expected: on a Barabasi--Albert graph, hub
vertices participate in many cells, and each cell containing
a hub becomes adjacent to nearly every cell the hub touches.
Critically, this does not affect the per-edit cost: as
\Cref{rem:nerve-pressure} makes precise, the nerve degree
enters only the deferred global-assembly term, while the
per-edit update and the per-cell flush eigensolve are governed
by $v_{\max}$ and the multiplicity cap $\mu_{\max} \leq 3$,
both independent of $D$. The operative bounded objects are
vertex participation ($\mu_{\max}$) and cell size
($v_{\max}$), not global nerve adjacency.
\end{remark}

\begin{remark}[Nerve degree is not the operative bound]
\label{rem:nerve-pressure}
Of the three bounded-local-geometry parameters
$(v_{\max}, d, D)$, the per-edit update cost and the per-cell
flush eigensolve depend only on $v_{\max}$ and $d$, never on the
nerve degree $D$. Three facts establish this. (i) The lazy
per-edit update is $O(d^2)$ (\Cref{thm:lazy}), with no $D$ term.
(ii) A single edit marks at most two cells dirty
(\Cref{lem:locality}); for a Case~B edit these are the host and
the one other cell sharing the duplicated boundary vertex, a
count independent of how many cells the host borders in the
nerve (its degree $D$). (iii) The local recompute of each dirty cell
is $O(v_{\max}^3 d^3)$, again with no $D$ term. The nerve degree
$D$ enters \emph{exclusively} through the global assembly
differential $\partial_0$ (\Cref{prop:assembly}), whose block
rows per cell scale with $D$, and the assembly is the deferred,
synchronization-time step, already $O(n)$ per flush in the
implemented full-traversal path.

Consequently a graph may exhibit high nerve degree (hub cells
adjacent to nearly all others, as observed at $D_{\max} = 992$
at $V = 5 \times 10^6$) while the headline $35\,\mu$s per-edit
latency and the zero-drift guarantee are entirely unaffected.
The invariant the streaming construction actually enforces is
the vertex multiplicity cap $\mu_{\max} \leq K = 3$; the
nerve-degree bound $D \leq v_{\max}(K-1)$ of
\Cref{cor:mu-cap-D} is a \emph{consequence} of that cap, but the
empirical $D$ is not the quantity that governs incremental cost.
The sharper reading of bounded local geometry is therefore: what
is capped is vertex participation ($\mu_{\max} = 3$) and cell
size ($v_{\max} = 500$), not global nerve adjacency. A high-$D$
nerve is tolerated because incremental work is driven by how many
cells a single vertex touches, not by how many cells a single
cell borders.
\end{remark}

\section{Worked Example}\label{sec:example}

We illustrate the incremental algorithm on a small complex to make the
mechanism concrete.

\subsection{Setup}

Consider a graph $G$ with 6 vertices $\{1, 2, 3, 4, 5, 6\}$ and edges
$\{(1,2), (2,3), (3,1), (4,5), (5,6), (6,4)\}$: two disjoint
triangles. Equip $G$ with a cellular sheaf $\FF$ of uniform stalk
dimension $d = 1$ (scalar stalks) and scaled identity restriction maps
$\rho_{v \leq e} = [0.99]$ for all incidence relations.

Partition into two cells: $V_1 = \{1, 2, 3\}$, $V_2 = \{4, 5, 6\}$.
No boundary vertices (the triangles are disjoint). The nerve has 2
vertices and 0 edges.

\subsection{Initial Computation}

Each cell is a triangle with $d = 1$. The coboundary matrix for one
triangle is $\delta^0 \in \R^{3 \times 3}$:
\[
  \delta^0 = 0.99 \begin{pmatrix} -1 & 1 & 0 \\ 0 & -1 & 1 \\ 1 & 0 & -1 \end{pmatrix}
\]
$\rk(\delta^0) = 2$ (the rows sum to zero). Hence $h^0 = 3 - 2 = 1$
and $h^1 = 3 - 2 = 1$. Each triangle contributes one independent
cycle to $H^1$.

\subsection{Incremental Edit: Add Edge $(3, 4)$}

This is a Case B edit: vertex 3 is in $V_1$, vertex 4 is in $V_2$.
The algorithm:
\begin{enumerate}[leftmargin=2em]
\item Duplicate vertex 4 into $V_1$ as boundary vertex $4'$.
$V_1$ now has vertices $\{1, 2, 3, 4'\}$ and edges
$\{(1,2), (2,3), (3,1), (3,4')\}$.
\item Install restriction map $\rho_{4' \leq (3,4')} = [0.99]$.
\item Register boundary: $V_1$ and $V_2$ share vertex 4 as a
boundary vertex. The nerve now has 2 vertices and 1 edge.
\item Mark $V_1$ \textbf{and} $V_2$ as dirty (maintaining
invariant (I3) from \Cref{def:invariant}).
\end{enumerate}

\textbf{Cost}: $O(d^2) = O(1)$ in lazy mode. No eigensolve.

\subsection{Flush}

At flush time, both phases execute:

\textbf{Phase 1 (Local recomputation):}
\begin{itemize}[leftmargin=2em]
\item $V_1$: now 4 vertices, 4 edges. $\delta^0_1 \in \R^{4 \times 4}$.
$\rk(\delta^0_1) = 3$, so $h^0_1 = 1$, $h^1_1 = 4 - 3 = 1$.
\item $V_2$: unchanged internally. $h^0_2 = 1$, $h^1_2 = 1$.
\end{itemize}

\textbf{Phase 2 (Global assembly):}
The nerve now has one edge connecting $V_1$ and $V_2$ via boundary
vertex 4. The Mayer-Vietoris sequence over this nerve recovers
the global cohomology: $\dim H^1(X; \FF) = h^1_1 + h^1_2 = 2$.
(In this example, the two cells share only a single boundary
vertex and no cross-cell cycle is created, so the local sum and
the Mayer-Vietoris result agree. For graphs with cross-cell
cycles (see \Cref{rem:c4}), the assembly step is essential.)

\textbf{Cost}: $O(v_{\max}^3 d^3 + D^3 d^3) = O(64 + 1)$.
Independent of $n$.

\subsection{Batch Verification}

To verify zero drift, we compute $H^1$ on the original
(non-duplicated) graph $G'$ obtained by adding edge $(3,4)$ to $G$.
The graph $G'$ has 6 vertices and 7 edges. The coboundary matrix
$\delta^0 \in \R^{7 \times 6}$ has $\rk(\delta^0) = 5$, since
the graph is connected and the scalar coboundary matrix (with
uniform nonzero restriction maps $\rho = 0.99$) has the same rank
as the standard incidence matrix, whose kernel is one-dimensional
for a connected graph. Hence
$\dim H^1 = \dim C^1 - \rk(\delta^0) = 7 - 5 = 2$,
equivalently $|E| - |V| + c = 7 - 6 + 1 = 2$.
The incremental result
$\dim H^1 = 2$ agrees exactly with the batch result. Zero drift.

\begin{remark}[Cross-cell cycles require assembly]\label{rem:c4}
Consider a cycle graph $C_4$ with vertices $\{1,2,3,4\}$, edges
$\{(1,2),(2,3),(3,4),(4,1)\}$, and a constant sheaf ($d = 1$,
identity restriction maps), partitioned into $V_1 = \{1,2\}$ and
$V_2 = \{3,4\}$. After cross-cell edge insertions with boundary
duplication, each cell is a path (tree) locally: $h^1_1 = 0$,
$h^1_2 = 0$, giving $\sum h^1_i = 0$. But the global
$H^1(C_4) = 1$ because the cycle spans both cells and is invisible
to either cell alone. The Mayer-Vietoris assembly detects this
cross-cell cycle through the boundary maps and connecting
homomorphisms, recovering the correct global $\dim H^1 = 1$.
Without global assembly, any cross-cell cycle produces an
undercount.
\end{remark}

\section{Algorithm Description}\label{sec:algorithm}

\begin{figure}[t]
\centering
\begin{tikzpicture}[
  scale=0.85,
  cell/.style={draw, dashed, rounded corners=8pt, inner sep=10pt,
    gray!60},
  vertex/.style={circle, draw, fill=white, minimum size=16pt,
    inner sep=0pt, font=\small},
  newvertex/.style={circle, draw, dashed, fill=white,
    minimum size=16pt, inner sep=0pt, font=\small},
  existedge/.style={thick, gray!50},
  newedge/.style={very thick, dashed},
  caselabel/.style={font=\small\bfseries},
  casedesc/.style={font=\scriptsize, text width=3.2cm, align=left},
]

\node[cell, minimum width=4.2cm, minimum height=3.2cm] (c1) at (0,0) {};
\node[font=\scriptsize, gray] at (-1.5,1.3) {Cell $V_1$};

\node[cell, minimum width=3.2cm, minimum height=3.2cm] (c2) at (5,0) {};
\node[font=\scriptsize, gray] at (3.8,1.3) {Cell $V_2$};

\node[vertex] (v1) at (-1.2,0.5) {1};
\node[vertex] (v2) at (0.5,0.7) {2};
\node[vertex] (v3) at (-0.2,-0.5) {3};
\node[vertex] (v4) at (1.2,-0.3) {4};

\node[vertex] (v5) at (4.0,0.5) {5};
\node[vertex] (v6) at (5.5,0.5) {6};
\node[vertex] (v7) at (4.8,-0.5) {7};

\draw[existedge] (v1) -- (v2);
\draw[existedge] (v1) -- (v3);
\draw[existedge] (v3) -- (v4);
\draw[existedge] (v5) -- (v6);
\draw[existedge] (v5) -- (v7);

\draw[newedge, blue!70] (v2) -- (v3);
\node[caselabel, blue!70] at (-0.5,-2.2) {Case A};
\node[casedesc, blue!70] at (-0.5,-2.8) {Intra-cell edge\\Dirty: $V_1$ only};

\draw[newedge, red!70] (v4) -- (v5);
\node[caselabel, red!70] at (2.8,-2.2) {Case B};
\node[casedesc, red!70] at (2.8,-2.8) {Cross-cell edge\\Dirty: $V_1$ and $V_2$};

\node[newvertex] (v8) at (6.5,-0.3) {8};
\draw[newedge, orange!80!black] (v6) -- (v8);
\node[caselabel, orange!80!black] at (6.2,-2.2) {Case C};
\node[casedesc, orange!80!black] at (6.2,-2.8) {New vertex\\Dirty: $V_2$ only};

\node[cell, minimum width=2cm, minimum height=1.5cm] (c3) at (9,0) {};
\node[font=\scriptsize, gray] at (8.2,0.6) {Seed};
\node[newvertex] (v9) at (8.5,-0.1) {9};
\node[newvertex] (v10) at (9.5,-0.1) {10};
\draw[newedge, green!50!black] (v9) -- (v10);
\node[caselabel, green!50!black] at (9,-2.2) {Case D};
\node[casedesc, green!50!black] at (9,-2.8) {Both new\\New seed cell};

\end{tikzpicture}
\caption{The four incremental edit cases on a partitioned cell
complex. Dashed edges are new insertions. Case A touches one cell;
Case B touches two (both marked dirty for the drift-zero guarantee);
Case C adds a new vertex to one cell; Case D creates a new seed cell
from two previously unknown vertices. Existing edges shown in gray.}
\label{fig:cases}
\end{figure}

\subsection{Data Structures}

The algorithm maintains the following data structures:

\begin{enumerate}[leftmargin=2em]
\item \textbf{Cell Manager}: a registry of cells, each containing a
local graph, a local cellular sheaf, local-to-global and
global-to-local vertex index maps, and a set of boundary vertices.

\item \textbf{Vertex Router}: a hash map from global vertex IDs to
the set of cells containing each vertex. Lookup is $O(1)$.

\item \textbf{Nerve Adjacency}: a flat dictionary keyed by canonical
cell-pair identifiers, storing the set of shared boundary vertices.
Optionally augmented by a hierarchical nerve tree for $O(1)$
lowest-common-ancestor queries.

\item \textbf{Boundary Maps}: for each pair of adjacent cells sharing
a boundary vertex, a restriction map encoding the inter-cell
compatibility condition. Initialized as scaled identity matrices
satisfying the Purity Gate (\Cref{def:purity}).

\item \textbf{Restriction Map Pool}: a pre-computed pool of $P$
random orthogonal matrices (each $d \times d$), scaled and clamped
to satisfy the Purity Gate. New edges draw restriction maps from
this pool via a round-robin index, avoiding per-edge random matrix
generation. When a \emph{Restriction Store} is available, pool
entries are registered once and referenced by ID, eliminating
array copies entirely.

\item \textbf{Dirty Set}: a hash set of cell IDs marking cells whose
local cohomology must be recomputed at the next flush.
\end{enumerate}

\subsection{The Purity Gate}

\begin{definition}[Purity Gate]\label{def:purity}
The \emph{Purity Gate} is a spectral norm bound
$\sigma_{\max}(\rho) \leq \rho_{\max}$ enforced on all restriction
maps $\rho \in \R^{d \times d}$ in the sheaf. Any restriction map
violating this bound is rescaled:
$\rho \leftarrow \rho \cdot (\rho_{\max} / \sigma_{\max}(\rho))$.
\end{definition}

The Purity Gate is a constraint on the input sheaf. It bounds the
operator norm of every restriction map and prevents norm
amplification along restriction chains. Numerical conditioning of
the local eigensolve additionally depends on the smallest nonzero
singular values of the resulting local coboundary; in finite
precision, ranks are computed relative to a fixed tolerance.
All exactness statements in this paper (including
\Cref{thm:drift}) are in the algebraic RAM model of
\Cref{sec:model}. The $O(1)$ complexity guarantee of
\Cref{thm:incremental} assumes the input sheaf satisfies the
Purity Gate bound.

\subsection{Pseudocode}

\FloatBarrier
\begin{algorithm}[H]
\caption{Incremental Sheaf Cohomology Maintenance}
\label{alg:incremental}
\begin{algorithmic}[1]
\Procedure{IncrementalUpdate}{edge $(u, v)$}
  \State $C_u \gets \textsc{VertexRouter}.\textsc{Lookup}(u)$
  \State $C_v \gets \textsc{VertexRouter}.\textsc{Lookup}(v)$
  \If{$C_u = \emptyset$ and $C_v = \emptyset$}
    \State \textsc{CreateSeedCell}$(u, v)$
    \Comment{Case D}
  \ElsIf{$C_u \cap C_v \neq \emptyset$}
    \State $c \gets$ any cell in $C_u \cap C_v$
    \State \textsc{AddIntraEdge}$(c, u, v)$
    \Comment{Case A}
  \ElsIf{$C_u \neq \emptyset$ and $C_v \neq \emptyset$}
    \State \textsc{AddCrossEdge}$(C_u, C_v, u, v)$
    \Comment{Case B}
  \Else
    \State \textsc{AddNewVertex}$(u, v, C_u, C_v)$
    \Comment{Case C}
  \EndIf
\EndProcedure

\Procedure{AddIntraEdge}{cell $c$, vertices $u, v$}
  \State Add edge $(u, v)$ to local graph of $c$
  \State Initialize restriction maps from pool
  \State \textsc{MarkDirty}$(c)$
\EndProcedure

\Procedure{AddCrossEdge}{$C_u, C_v$, vertices $u, v$}
  \State $c_h \gets \arg\max_{c \in \{c_u, c_v\}}
    (v_{\max} - |c|)$
    \Comment{Cell with more headroom}
  \State Duplicate far endpoint into $c_h$ as boundary vertex
  \State Add edge in $c_h$; register nerve edge
  \State Initialize boundary restriction map
  \State \textsc{MarkDirty}$(c_h)$
  \State \textsc{MarkDirty}$(c_{\text{other}})$
  \Comment{Both dirty (I3)}
\EndProcedure

\Procedure{Flush}{}
  \For{each $c$ in dirty set}
    \Comment{Phase 1: local}
    \State Invalidate caches of $c$
    \State Recompute $\rk(\delta^0_c)$ via eigensolve of $L_c = (\delta^0_c)^\top \delta^0_c$
  \EndFor
  \State Recover global $H^1$ via Mayer-Vietoris over $\NN$
    \Comment{Phase 2: assembly}
  \State Clear dirty set
\EndProcedure
\end{algorithmic}
\end{algorithm}

\subsection{Cell Splitting}

When a cell exceeds $v_{\max} + \tau$ vertices (where $\tau$ is a
configurable soft overage tolerance), it is split into two sub-cells.
The splitting algorithm uses:

\begin{enumerate}[leftmargin=2em]
\item \textbf{Fiedler bisection} for cells up to a threshold size:
compute the Fiedler vector (eigenvector corresponding to $\lambda_2$
of the graph Laplacian) and partition vertices by the sign of their
Fiedler vector components. This produces a balanced, spectrally
optimal bisection.

\item \textbf{BFS balanced partition} for larger cells: a
breadth-first traversal that alternately assigns frontier vertices
to two partitions, producing a balanced split without the cubic cost
of eigendecomposition.
\end{enumerate}

After splitting, boundary vertices are established at the cut edges,
boundary restriction maps are initialized, and the dirty status is
transferred from the parent cell to both children. The children are
registered in the cell manager, and the parent cell is deregistered.
The amortized cost of splits is $O(v_{\max}^2 d^3)$ per insertion
by \Cref{thm:split}.

\subsection{Restriction Store and Memory Efficiency}

\begin{definition}[Restriction Store]
The \emph{Restriction Store} is a content-addressed storage layer
for restriction maps. Each unique $d \times d$ restriction matrix is
stored once and assigned an integer ID. Cells reference restriction
maps by ID rather than by array pointer, enabling:
\begin{enumerate}[leftmargin=2em]
  \item \textbf{Deduplication}: boundary restriction maps (often
  identity blocks) are stored once regardless of how many cell pairs
  share them.
  \item \textbf{Zero-copy initialization}: new edges reference pool
  entries by ID; no array allocation or copy occurs per edge.
  \item \textbf{Bounded memory}: total restriction map storage is
  $O(U \cdot d^2)$ where $U$ is the number of unique maps, typically
  $U \ll E$.
\end{enumerate}
\end{definition}

In our benchmarks at $V = 5 \times 10^6$, the Restriction Store
contains $U = 1{,}025$ unique maps occupying 0.50 MB, compared to
the $E \cdot 2 d^2 \approx 2.7 \times 10^8$ bytes that would be
required for per-edge array storage.

\subsection{Deferred Cache Invalidation}

In the lazy path, even cache invalidation (nullifying the local
Laplacian matrix and eigenvector arrays) is deferred to flush time
rather than performed per-edit. In implementations using
reference-counting memory management, per-edit invalidation triggers
immediate deallocation of multi-megabyte arrays. Profiling showed
that this accounted for 62\% of streaming cost in an earlier
implementation. Deferring invalidation to flush time eliminates
this overhead without affecting correctness, since the lazy-mode
contract already specifies that cached values are stale between
edits (condition (I2) of \Cref{def:invariant}).

\subsection{Global Cohomology Assembly}\label{sec:assembly}

The local cohomology $H^1_i$ of each cell $V_i$ does not, in
general, determine the global cohomology $H^1(X; \FF)$.
Cohomology classes that span multiple cells (cycles in the graph
whose vertices are distributed across two or more cells, with
restriction maps that are collectively inconsistent around the
cycle) are invisible to any single cell's local computation. The
relationship between local and global cohomology is governed by the
Mayer-Vietoris exact sequence
\cite{curry2014sheaves, ghrist2014elementary}.

\begin{definition}[Global Assembly]
After each flush, the algorithm performs a \emph{global assembly}
step that computes $\dim H^1(X; \FF)$ from the local cell data and
the inter-cell boundary structure, using the Mayer-Vietoris
sequence over the nerve complex $\NN$ of the cellular decomposition.
\end{definition}

For a covering of $X$ by the open stars of the cells, with nerve
$\NN$, the Mayer-Vietoris sequence relates the global cohomology
to the local cohomology of the cells and their pairwise (and
higher-order) intersections. The global $\dim H^1(X; \FF)$ is
recovered from the local data and the connecting homomorphisms
induced by the boundary restriction maps.

\begin{theorem}[Gluing Theorem]\label{thm:gluing}
Let $\{V_1, \ldots, V_k\}$ be a cellular decomposition of
$(X, \FF)$ with nerve complex $\NN$, and suppose that every
cell's local cohomological data (eigenvalues of $L_i$, kernel
dimension, local $H^0_i$ and $H^1_i$) and every boundary
restriction map $\rho_{ij}^v$ are known exactly. Then the
global $\dim H^1(X; \FF)$ is determined by the Mayer-Vietoris
exact sequence over $\NN$:
\[
  \cdots \to \bigoplus_{(i,j) \in \NN_1} H^0(V_i \cap V_j; \FF)
  \xrightarrow{\;\partial\;}
  \bigoplus_{i \in \NN_0} H^1(V_i; \FF)
  \to H^1(X; \FF)
  \to \cdots
\]
where $\NN_0$ and $\NN_1$ are the 0-simplices and 1-simplices
of the nerve, the intersection $V_i \cap V_j$ is realized by the
shared boundary vertices, and $\partial$ is the connecting
homomorphism induced by the boundary restriction maps. In
particular, $\dim H^1(X; \FF)$ can be computed from the local
data and boundary maps alone, without access to the global
coboundary matrix $\delta^0$.
\end{theorem}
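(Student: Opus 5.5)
The plan is to prove \Cref{thm:gluing} by building a Čech double complex for the cover by the cells and showing that its total complex is quasi-isomorphic to $C^\bullet(X;\FF)$. Every term of that double complex will involve only local cell cochains and the boundary maps $\rho^v_{ij}$.

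\textbf{Step 1: the cover.} I would first replace the duplicated-boundary model by an honest cover of $X$ by subcomplexes, as \Cref{rem:equiv} suggests. Let $U_i \subseteq X$ be the subcomplex made of the vertices of $V_i$ (boundary copies identified with their originals) together with the edges hosted by $V_i$. Since each edge has a unique host, every edge of $X$ lies in exactly one $U_i$. Each vertex $v$ lies in the $\mu(v)$ cells that contain a copy of it. Consequently every intersection $U_{i_0}\cap\cdots\cap U_{i_p}$ with $p\ge 1$ is a discrete set of shared boundary vertices, and it is nonempty exactly on the simplices of $\NN$.

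Identifying a copy $v'$ with $v$ uses the boundary maps $\rho^v_{ij}$. I would assume, as the construction does with its scaled identity maps, that these are isomorphisms, and transport every stalk to a reference copy. With that convention, the local sheaf on $U_i$ is the restriction $\FF|_{U_i}$.

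\textbf{Step 2: exactness of the Čech rows.} For each $q\in\{0,1\}$, consider the augmented sequence
\[
0 \to C^q(X;\FF) \to \bigoplus_{i} C^q(U_i;\FF) \to \bigoplus_{i<j} C^q(U_i\cap U_j;\FF) \to \cdots,
\]
with the alternating-sum Čech differential. Because cochain spaces are direct sums over cells $\sigma$, this sequence splits as a direct sum over $\sigma$ of $\FF(\sigma)\otimes \tilde C^\bullet(\Delta_\sigma)$. Here $\Delta_\sigma$ is the full simplex on the set of $U_i$ containing $\sigma$, which is contractible, so each summand is exact.

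The Čech rows are therefore exact, and the standard double-complex argument gives
\[
H^\bullet(X;\FF) \cong H^\bullet\bigl(\mathrm{Tot}\, \check C^{\bullet}(\{U_i\}; C^\bullet)\bigr).
\]

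\textbf{Step 3: the spectral sequence.} Next I would run the spectral sequence of the double complex that takes sheaf cohomology first. Its first page is $E_1^{p,q} = \bigoplus H^q(U_{i_0\cdots i_p};\FF)$. Because higher intersections contain no edges, $E_1^{p,1}=0$ for $p\ge1$ and $E_1^{0,1} = \bigoplus_i H^1(U_i)$. The row $q=0$ is the Čech complex of the local $H^0$'s, and its differentials are induced by the maps $\rho^v_{ij}$.

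This gives
\[
\dim H^1(X;\FF) = \dim \check H^1\bigl(\NN; H^0\bigr) + \dim\ker\bigl(d_2 : \textstyle\bigoplus_i H^1(U_i) \to \check H^2(\NN;H^0)\bigr).
\]
Here $\check H^1(\NN;H^0)$ is the cokernel of $\bigoplus_i H^0(U_i) \to \bigoplus_{(i,j)} H^0(U_i\cap U_j)$ restricted to cocycles. In the displayed sequence of the theorem this appears as the image of the connecting map $\partial$; it is exactly where cross-cell cycles such as the one in \Cref{rem:c4} are counted.

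Every ingredient is local: the kernels of the $L_i$, the stalks at shared vertices, and the maps $\rho$. That yields the conclusion that no global $\delta^0$ is needed. Alternatively, I could skip the spectral sequence altogether and note that the total complex is itself assembled from local blocks. Its $H^1$ is then a rank computation on a matrix built only from the $\delta^0_i$ and the $\rho$'s, which is all the theorem claims.

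\textbf{Main obstacle.} I expect the hardest part to be Step 1 rather than the homological algebra. The ``intersection realized by shared boundary vertices'' must genuinely be the intersection of subcomplexes of $X$. The identification through $\rho^v_{ij}$ must also be consistent on triple overlaps: when $\mu(v)=3$, the composite $\rho_{jk}\rho_{ij}$ must equal $\rho_{ik}$, which is the cocycle condition. If some $\rho$ is not invertible, $\tilde X$ and $X$ really have different cohomology and the statement fails.

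I would handle this by stating invertibility and the cocycle condition explicitly as hypotheses. Both hold for the scaled identity boundary maps the construction installs, after normalising them to identities. Under these hypotheses the cover is honest and Steps 2 and 3 go through.
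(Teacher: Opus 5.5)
Your proposal is correct and follows the paper's route: the paper's proof likewise identifies $H^1(X;\FF)$ with the first cohomology of the total complex of the \v{C}ech double complex of the cover and notes that this complex involves only local cochains, the nerve, and the boundary maps, while you additionally prove the row exactness the paper only cites and carry out the $E_1$-page analysis the paper places in \Cref{prop:assembly-formula}. One precision on your hypotheses: invertibility plus the cocycle condition only glue the cells to \emph{some} sheaf on $X$ (for the constant sheaf on a $6$-cycle split into three two-vertex cells there are no triple overlaps, yet taking every $\rho^v_{ij} = s \neq 1$ gives $\rk \partial_0 = 3$ and an assembled $\dim H^1 = 0$ instead of $1$), so what your Step~1 identification with $\FF|_{U_i}$ really uses is $\rho^v_{ij} = \mathrm{id}$, and your normalisation of the scaled identities is therefore essential rather than cosmetic.
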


\begin{proof}
The cells $\{V_i\}$ with their boundary neighborhoods form a
covering of the cell complex $X$. The Mayer-Vietoris sequence
for a finite covering of a cell complex equipped with a cellular
sheaf is exact \cite{curry2014sheaves, ghrist2014elementary}.
The global $H^1(X; \FF)$ is computed as the first cohomology
of the \emph{total complex} associated with the cover: a finite
cochain complex built from the local cochain spaces
$C^\bullet(V_i; \FF)$, the intersection cochain spaces
$C^\bullet(V_i \cap V_j; \FF)$, and the restriction/connecting
maps between them. This total complex depends only on local
cell data, boundary restriction maps, and the nerve incidence
structure; it does not require forming the original global
coboundary matrix $\delta^0$. Its first cohomology is
$H^1(X; \FF)$ by exactness of the Mayer-Vietoris sequence.
\end{proof}

We now make the finite object being ranked explicit. A key
structural property of the cellular decomposition is that all cell
overlaps are discrete:

\begin{proposition}[Discrete Overlaps]\label{prop:discrete-overlap}
In the cellular decomposition of \Cref{def:decomp}, every cell
overlap $V_i \cap V_j$ (for adjacent cells $V_i$, $V_j$ in the
nerve) consists of a finite set of shared boundary vertices. In
particular, $V_i \cap V_j$ carries no edges, so
$H^q(V_i \cap V_j; \FF) = 0$ for all $q \geq 1$. The same holds
for all higher-order overlaps: for any $p$-simplex
$\sigma = (i_0, \ldots, i_p)$ in the nerve $\NN$, the
intersection $V_{i_0} \cap \cdots \cap V_{i_p}$ is a discrete
set of vertices, with vanishing cohomology in positive degrees.
\end{proposition}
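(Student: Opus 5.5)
The plan is to argue directly from the construction in \Cref{def:decomp}: the cells $V_i$ partition the original vertex set, and the only way two cells come to share anything is through boundary duplication of a vertex. First, I would characterize $V_i \cap V_j$ precisely. Since $\{V_1,\ldots,V_k\}$ is a partition of $V$, the interior vertex sets are disjoint. Each cross-cell edge $(u,v)$ is hosted by exactly one cell, and its far endpoint is recorded as a boundary vertex in both cells. So the shared data between $V_i$ and $V_j$ consists exactly of the vertices $v$ present (as original or duplicate) in both cells. There are finitely many such vertices, at most $v_{\max}$ by \Cref{def:blg}, and each carries its stalk $\FF(v)$ identified via $\rho^v_{ij}$.

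Second, I would show that no edge lies in $V_i \cap V_j$. Every edge of $X$ is realized inside exactly one cell. An intra-cell edge lives in its own cell. A cross-cell edge lives in the host only, because the non-host cell is explicitly "not modified internally" in \Cref{def:decomp}(1). Hence no 1-cell belongs to two cover elements, and the overlap is a 0-dimensional complex. This is the step I expect to be the main obstacle. The paper also describes the cover as open stars in the Alexandrov topology (\Cref{rem:equiv}), and the open star of a shared vertex does contain incident edges. I would therefore fix the convention that the overlap complex is the subcomplex of cells common to both cover elements. Under the duplicated-boundary model, I would check that this convention is the one used in the assembly complex. If needed, I would note that under the open-star reading the overlap deformation-retracts onto the shared vertices, so the cohomological conclusion is unchanged.

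Third, the cohomological conclusion is immediate. For a 0-dimensional cell complex $Y$, the cochain spaces satisfy $C^q(Y;\FF)=0$ for $q\ge 1$, so $H^q(Y;\FF)=0$ for $q\ge1$, and $H^0(Y;\FF)=\bigoplus_{v\in Y}\FF(v)$. For higher overlaps $V_{i_0}\cap\cdots\cap V_{i_p}$, the intersection is contained in any pairwise overlap $V_{i_0}\cap V_{i_1}$. A subcomplex of a discrete complex is discrete, so the same vanishing holds. This observation also feeds into \Cref{cor:mu-cap-D}, where such intersections are nonempty only if some vertex has $\mu(v)\ge p+1$.
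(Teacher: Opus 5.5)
Your argument is essentially the paper's: each cross-cell edge is realized only inside its host cell, so $V_i \cap V_j$ is the finite set of shared (duplicated) vertices with $C^q(V_i \cap V_j;\FF) = 0$ for $q \geq 1$, and higher-order overlaps are subsets of these vertex sets and hence also discrete. Drop your fallback for the Alexandrov open-star reading, which is false in general: there the overlap contains the edges incident to shared vertices, and because the construction allows two shared vertices to be adjacent in $X$, a union of their open stars need not retract onto the vertices and can carry nonzero $H^1$, so the proof should rest only on the realized-edge convention of \Cref{def:decomp}, as the paper's does.
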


\begin{proof}
By construction (\Cref{def:decomp}), each cross-cell edge
$(u,v)$ with $u \in V_i$ and $v \in V_j$ is realized \emph{inside}
a single host cell: the far endpoint is duplicated into the host
as a boundary vertex, and the edge is added within the host. The
overlap $V_i \cap V_j$ therefore consists only of the abstract
vertices shared between the two cells (the vertices whose formal
copies appear in both cells), with no shared edges. Since the
overlap is a finite discrete set, $C^1(V_i \cap V_j; \FF) = 0$,
which forces $H^q(V_i \cap V_j; \FF) = 0$ for $q \geq 1$.
The same argument applies to triple and higher-order overlaps:
$V_{i_0} \cap \cdots \cap V_{i_p}$ consists of vertices
appearing in all $p+1$ cells, again a discrete set.
\end{proof}

\begin{remark}[Nerve dimension]
\label{rem:nerve-dim}
The nerve $\NN$ is not necessarily 1-dimensional. A single
abstract vertex $v$ can appear in $\mu(v) \geq 2$ cells: the
streaming construction duplicates a boundary vertex into a new
cell at each Case~B edge insertion involving that vertex, and
cell splitting can propagate boundary copies to child cells.
When a vertex $v$ appears in cells $V_{i_1}, \ldots, V_{i_s}$
with $s \geq 3$, the nerve acquires a $(s-1)$-simplex on those
cells, making $\dim \NN \geq 2$. For Barabasi--Albert graphs
with $m = 3$ and $v_{\max} = 500$, the maximum vertex
multiplicity observed in the benchmarks of \Cref{sec:experiments}
was $\mu_{\max} = 3$ (a vertex shared by at most 3 cells),
producing isolated 2-simplices in the nerve. The generalized
assembly formula below handles arbitrary nerve dimension;
when $\dim \NN = 1$ it reduces to the simpler two-term formula.
\end{remark}

Concretely, the Phase~1 eigensolve of $L_i$ at flush time produces
not only $\dim H^0(V_i; \FF)$ (the multiplicity of the zero
eigenvalue) but also an explicit orthonormal basis for
$H^0(V_i; \FF)$ (the corresponding eigenvectors). These basis
vectors, together with the cell-local evaluation maps at each
boundary vertex, are the data that enter the assembly differential
defined next.

\begin{definition}[Assembly Complex]\label{def:assembly-complex}
The \emph{assembly complex} is the \v{C}ech cochain complex
associated with the cover $\{V_i\}$ and the presheaf
$i \mapsto H^0(V_i; \FF)$ of local global sections:
\[
  A^0 \xrightarrow{\;\partial_0\;} A^1
  \xrightarrow{\;\partial_1\;} A^2 \to \cdots,
\]
where
\[
  A^p = \bigoplus_{\sigma \in \NN_p}
  H^0\!\big(\textstyle\bigcap_{i \in \sigma} V_i;\, \FF\big)
\]
collects the section spaces on $p$-fold overlaps indexed by the
$p$-simplices of the nerve $\NN$. Explicitly:
$A^0 = \bigoplus_{i \in \NN_0} H^0(V_i; \FF)$ collects the
cell-local global-section spaces;
$A^1 = \bigoplus_{(i,j) \in \NN_1} H^0(V_i \cap V_j; \FF)$
collects the section spaces on pairwise overlaps; and
$A^2 = \bigoplus_{(i,j,k) \in \NN_2}
H^0(V_i \cap V_j \cap V_k; \FF)$
collects the section spaces on triple overlaps (if any
2-simplices exist in $\NN$). Higher terms are defined
analogously. The differential $\partial_0: A^0 \to A^1$ is
assembled blockwise: on nerve edge $(i,j)$, and for each shared
boundary vertex $v$, the block records the difference
$\rho_{ij}^v \circ r_i^v - r_j^v$ of the two induced
restrictions of a global section onto the overlap, where
$r_i^v, r_j^v$ are the cell-local evaluation maps at $v$ and
$\rho_{ij}^v$ is the boundary restriction map of
\Cref{def:decomp}. The differential $\partial_1: A^1 \to A^2$
is the standard alternating-sum \v{C}ech coboundary on the nerve.
When $\dim \NN = 1$ (no 2-simplices), $A^2 = 0$ and the complex
reduces to the two-term complex $A^0 \to A^1$.
The matrix $\partial_0$ has $O(|\NN_1| \, d)$ rows and
$O(|\NN_0| \, d)$ columns and is block-sparse with at most $O(D)$
nonzero blocks per cell column.
\end{definition}

\begin{proposition}[Assembled Dimension Formula]\label{prop:assembly-formula}
Let the cellular decomposition have discrete overlaps
(\Cref{prop:discrete-overlap}). Then, with the assembly complex
of \Cref{def:assembly-complex},
\[
  \dim H^1(X; \FF)
  = \sum_{i \in \NN_0} \dim H^1(V_i; \FF)
    \;+\; \dim \check{H}^1(\NN;\, \mathcal{H}^0),
\]
where $\mathcal{H}^0$ denotes the presheaf
$\sigma \mapsto H^0(V_\sigma; \FF)$ on the nerve and
$\check{H}^1(\NN; \mathcal{H}^0)
= \ker(\partial_1) / \im(\partial_0)$
is the first \v{C}ech cohomology of the assembly complex.
The first term is the sum of the local first cohomologies and
the second term is the cross-cell contribution detected by the
boundary maps. When $\dim \NN = 1$ (no 2-simplices in the nerve),
$\partial_1 = 0$ and $\check{H}^1 = \operatorname{coker}(\partial_0)$,
so the formula simplifies to
\[
  \dim H^1(X; \FF)
  = \sum_{i \in \NN_0} \dim H^1(V_i; \FF)
    \;+\; \big(\dim A^1 - \rk(\partial_0)\big).
\]
In all cases $\dim H^1(X; \FF)$ is obtained from rank computations
on the finite, block-sparse \v{C}ech differentials together with
the local $\dim H^1(V_i; \FF)$ values; no global coboundary matrix
$\delta^0$ is formed.
\end{proposition}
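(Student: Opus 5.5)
The plan is to realize $H^1(X;\FF)$ as the total cohomology of a \v{C}ech--cellular double complex for the cover $\{V_i\}$, and then read off the formula from the associated spectral sequence. \Cref{prop:discrete-overlap} and the fact that $X$ is one-dimensional make almost every term of that spectral sequence vanish.

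\textbf{Step 1: the double complex computes $H^\bullet(X;\FF)$.} Set
\[
  K^{p,q} = \bigoplus_{\sigma \in \NN_p} C^q(V_\sigma;\FF),
\]
with the cellular coboundary vertically and the alternating \v{C}ech restriction horizontally. The overlap identifications are made through the boundary maps $\rho_{ij}^v$, which I take to be isomorphisms identifying the formal copies of a stalk, as in \Cref{rem:equiv}. I would show that for each fixed $q$ the augmented row
\[
  0 \to C^q(X;\FF) \to K^{0,q} \to K^{1,q} \to \cdots
\]
is exact. This reduces to a cellwise statement. Each edge of $X$ is realized in exactly one host cell, so its row is just $\FF(e)$ in degree $p=0$. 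A vertex $v$ lying in the cells $V_{i_1},\dots,V_{i_s}$ contributes the augmented cochain complex of the full $(s-1)$-simplex with coefficients in $\FF(v)$, which is acyclic. Hence the total complex has cohomology $H^\bullet(X;\FF)$.

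\textbf{Step 2: the $E_1$ and $E_2$ pages.} Take vertical cohomology first, giving $E_1^{p,q} = \bigoplus_{\sigma\in\NN_p} H^q(V_\sigma;\FF)$. Two facts kill most entries:
\begin{itemize}[leftmargin=2em]
\item each $V_i$ is a graph, so $E_1^{p,q}=0$ for $q\geq 2$;
\item by \Cref{prop:discrete-overlap}, $E_1^{p,q}=0$ for $p\geq 1$ and $q\geq 1$.
\end{itemize}
The row $q=0$ is exactly the assembly complex $A^\bullet$ of \Cref{def:assembly-complex}, so $E_2^{1,0}=\check H^1(\NN;\mathcal H^0)$. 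The row $q=1$ is concentrated in $p=0$, so $E_2^{0,1}=\bigoplus_i H^1(V_i;\FF)$.

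\textbf{Step 3: filtration of $H^1$.} In total degree $1$ we get a short exact sequence
\[
  0\to E_\infty^{1,0}\to H^1(X;\FF)\to E_\infty^{0,1}\to 0 .
\]
Here $E_\infty^{1,0}=E_2^{1,0}$, since every differential into that spot comes from a zero group. Also $E_\infty^{0,1}=\ker\bigl(d_2\colon E_2^{0,1}\to E_2^{2,0}\bigr)$, and all later differentials land in zero groups. Taking dimensions then gives the formula, provided $d_2=0$. When $\dim\NN=1$ we have $E_2^{2,0}=0$ and $\partial_1=0$, which yields the simplified cokernel form.

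\textbf{Step 4, the main obstacle: showing $d_2=0$ when $\NN$ has $2$-simplices.} I would argue at the cochain level. A class in $H^1(V_i;\FF)$ is represented by a $1$-cochain supported on edges of $V_i$. Its \v{C}ech coboundary in $K^{1,1}$ is a restriction to pairwise overlaps, and these carry no edges by \Cref{prop:discrete-overlap}, so it is identically zero. The zig-zag defining $d_2$ therefore lifts by $0$ and produces $0$ in $E_2^{2,0}$. Consequently every local class survives, and the surjection of $H^1(X;\FF)$ onto $\bigoplus_i H^1(V_i;\FF)$ holds.

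The only remaining point to verify carefully is Step 1's use of $\rho_{ij}^v$ as a genuine identification. If the boundary maps were not invertible, the \v{C}ech rows would not be exact, and I would need to restrict the hypothesis to that case.
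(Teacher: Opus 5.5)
Your proposal is correct and follows the same route as the paper. Both use the \v{C}ech-to-derived spectral sequence of the cover, with discrete overlaps killing $E_1^{p,q}$ for $p,q\geq 1$, and read off $H^1(X;\FF)$ from $E_2^{1,0}=\check{H}^1(\NN;\mathcal{H}^0)$ and $E_2^{0,1}=\bigoplus_i H^1(V_i;\FF)$. Your Step~4 supplies something the paper's proof omits: it asserts degeneration at $E_2$ while its stated reason covers only $d_1$, whereas you show $d_2\colon E_2^{0,1}\to E_2^{2,0}$ vanishes because $K^{1,1}=0$ at the cochain level. Your closing caveat should be sharpened, though: the $\rho_{ij}^v$ must be the identifications under which the glued sheaf is $\FF$ (e.g.\ identities on $\FF(v)$), since invertible but non-identity maps would compute the cohomology of a twisted sheaf rather than of $\FF$.
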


\begin{proof}
For the cover by the open stars of the cells, the \v{C}ech-to-derived
spectral sequence has $E_1^{p,q} = \bigoplus_{|\sigma| = p}
H^q(V_\sigma; \FF) \Rightarrow H^{p+q}(X; \FF)$. By
\Cref{prop:discrete-overlap}, all overlaps are discrete, so
$E_1^{p,q} = 0$ for all $p \geq 1$ and $q \geq 1$: the only
nonzero entries lie in the row $q = 0$ (the \v{C}ech complex of
global sections) and the column $p = 0$ (the derived functors of
each cell). The spectral sequence therefore degenerates at $E_2$
with two contributions to total degree 1:
$E_2^{0,1} = \bigoplus_i H^1(V_i; \FF)$ (the local first
cohomologies survive because all incoming and outgoing
differentials have targets in the zero entries
$E_1^{p,1} = 0$ for $p \geq 1$), and
$E_2^{1,0} = \check{H}^1(\NN; \mathcal{H}^0)
= \ker(\partial_1)/\im(\partial_0)$, the first cohomology of the
\v{C}ech complex of global sections on the nerve. Hence
$\dim H^1(X; \FF) = \sum_i \dim H^1(V_i; \FF)
+ \dim \check{H}^1(\NN; \mathcal{H}^0)$.

When $\dim \NN = 1$, there are no 2-simplices, so $A^2 = 0$
and $\partial_1 = 0$. Then
$\check{H}^1 = A^1 / \im(\partial_0)
= \operatorname{coker}(\partial_0)$, and
$\dim \operatorname{coker}(\partial_0) = \dim A^1 - \rk(\partial_0)$,
giving the simplified two-term formula.
\end{proof}

\begin{example}[Assembly on $C_4$]
For the $C_4$ decomposition of \Cref{rem:c4}, each cell is a path,
so $\sum_i \dim H^1(V_i) = 0$. The two cells share two boundary
vertices, so $\dim A^1 = 2$; the nerve is a single edge
($\dim \NN = 1$, no 2-simplices), so the simplified formula
applies. With the constant sheaf, the
single-section restrictions agree at both shared vertices up to one
common difference, giving $\rk(\partial_0) = 1$. Thus
$\dim H^1(C_4) = 0 + (2 - 1) = 1$, matching the global value.
\end{example}

\begin{proposition}[Assembly Cost]\label{prop:assembly}
With a maintained rank factorization of the \v{C}ech
differentials $\partial_0$ and $\partial_1$, the global assembly
step costs $O(k_d \cdot D^3 \cdot d^3)$ per flush, where $k_d$
is the number of dirty cells: each dirty cell changes at most
$O(D)$ rows of $\partial_0$, and the Schur complement rank update
on a block of dimension $O(D \cdot d)$ costs $O(D^3 d^3)$. The
cost of updating $\partial_1$ is bounded by the same expression
since each 2-simplex involves at most 3 cells, and each dirty
cell participates in at most $O(D^2)$ 2-simplices. This is
$O(1)$ in $n$ when $k_d$, $D$, and $d$ are bounded. Without such
a certificate, the assembly requires a full traversal of the nerve
complex $\NN$ (rank computations on $\partial_0$ and $\partial_1$
of \Cref{def:assembly-complex}), costing
$O(|\NN| \cdot D^2 \cdot d^3)$ per flush,
where $|\NN| = O(n / v_{\max})$ is the number of nerve vertices.
This full-traversal cost is $O(n)$, not $O(1)$, but is invoked
only at synchronization points and is dominated by the local
eigensolve cost in practice. The implementation evaluated in
\Cref{sec:experiments} uses the full-traversal path.
\end{proposition}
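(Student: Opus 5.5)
The plan is to prove the two regimes of \Cref{prop:assembly} separately, working from the explicit block structure of $\partial_0$ and $\partial_1$ in \Cref{def:assembly-complex} and from the dimension formula of \Cref{prop:assembly-formula}.

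\textbf{Step 1: how far a dirty cell reaches.} First I would bound the part of the assembly complex that one dirty cell $V_i$ can touch. After the Phase~1 eigensolve, only the column block of $\partial_0$ indexed by $H^0(V_i;\FF)$ changes. So do the row blocks for nerve edges $(i,j)$, since the evaluation maps $r_i^v$ and possibly the $\rho_{ij}^v$ are new.

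\begin{itemize}
\item The column block has width at most $v_{\max} d$, and $\dim H^0(V_i) \le v_{\max} d$.
\item By \Cref{def:blg} there are at most $D$ incident nerve edges. Each carries $O(d)$ rows per shared vertex, so the touched part of $\partial_0$ is a block of $O(D)$ row blocks.
\item For $\partial_1$, each 2-simplex containing $i$ is fixed by a pair of neighbours of $i$, so there are at most $\binom{D}{2}=O(D^2)$ of them. Each involves three cells, as in \Cref{cor:mu-cap-D}.
\end{itemize}

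\textbf{Step 2: certificate path.} I would maintain a rank-revealing (LU/QR-type) factorization of $\partial_0$ and $\partial_1$. I would then treat the change from one dirty cell as a low-rank block modification supported on $O(D)$ row blocks and one column block. The standard bordered/Schur complement update re-eliminates only the modified block against the existing pivots.

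\begin{itemize}
\item The Schur complement has dimension $O(Dd)$, so its rank is recomputed in $O((Dd)^3)=O(D^3 d^3)$.
\item The same bound covers $\partial_1$, because its touched block is bounded by the same parameters (absorbing constants into the $D^3$ term).
\item Summing over the $k_d$ dirty cells gives $O(k_d D^3 d^3)$.
\item Plugging the updated ranks into \Cref{prop:assembly-formula} costs $O(1)$ extra. The local $\dim H^1(V_i)$ terms are already maintained from Phase~1.
\end{itemize}

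\textbf{Step 3: full traversal.} Here I would assemble $\partial_0$ and $\partial_1$ from scratch over all $|\NN|$ cells. Cells are disjoint up to boundary duplicates and each has at most $v_{\max}$ vertices, so $|\NN| = O(n/v_{\max})$. The cost is then bounded by exploiting block sparsity.

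\begin{itemize}
\item Process the nerve cell by cell. Each cell contributes $O(D)$ blocks to $\partial_0$ and $O(D^2)$ blocks to $\partial_1$.
\item Each block costs $O(d^3)$ to form, as a product of $d\times d$ maps.
\item Do elimination in a cell-ordered sparse fashion, so each pivot interacts with $O(D^2)$ blocks.
\item This gives $O(|\NN| D^2 d^3)$, which is $O(n)$ for fixed parameters.
\end{itemize}

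\textbf{Main obstacle.} I expect the hardest part to be the claim that a single dirty cell's change stays confined to an $O(Dd)$ Schur complement. In general, sparse elimination suffers fill-in, and a rank update can propagate through all existing pivots. The justification I would try first is that the modification has rank $O(Dd)$, so one rank-$r$ update of a maintained factorization costs $O(r^3)$ plus work on the affected rows. I would state this as an assumption on the certificate data structure, namely that it supports local low-rank updates. The same fill-in concern applies to the sparse-elimination claim in Step~3, so that step rests on the same assumption.
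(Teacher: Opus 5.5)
Your route is the paper's own. The paper gives no separate proof; its justification is the reasoning written into the statement, namely $O(D)$ touched row blocks and $O(D^2)$ 2-simplices per dirty cell, an $O(Dd)$ Schur complement ranked in $O(D^3d^3)$, and an $O(D^2d^3)$ charge per nerve vertex for the full pass. But the step you flag as the main obstacle is a genuine gap, in the paper's justification too (its Discussion leaves the certificate as future work). Declaring it an assumption on the certificate assumes the conclusion, and no standard factorization has that property. Put the updated $\partial_0$ in bordered form with $A$ the untouched part and $B,C,E$ the touched blocks. Then $\rk\partial_0-\rk A$ is the rank of a generalized Schur complement built from $(I-AA^{+})B$, $C(I-A^{+}A)$ and $E-CA^{+}B$. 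Your $O((Dd)^3)$ pays only for ranking that matrix once it is formed. Forming it means applying $A^{+}$ to the touched blocks, i.e.\ solving against the maintained factors of the untouched part, whose size grows with $|\NN_0|\ge|V|/v_{\max}$. The factorization must then be updated before the next dirty cell, and the low-rank update algorithms the paper cites cost time quadratic in the dimension of $\partial_0$ for that.

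The dependence is genuinely non-local. Take $d=1$ and a nerve that is a single cycle of $N$ cells with one shared vertex per nerve edge. Then $\partial_0$ is square, and row $(i,i+1)$ carries $a_i=\rho^{v}_{i,i+1}r^{v}_{i}$ in column $i$ and $-b_{i+1}=-r^{v}_{i+1}$ in column $i+1$. So $\det\partial_0=\prod_i a_i-\prod_i b_i$, and whether touching one cell changes the rank depends on every cell on the cycle. Structurally, $\partial_0$ is itself the coboundary of a cellular sheaf on the nerve: vertex stalks $H^0(V_i;\FF)$, one edge per shared boundary vertex with stalk $\FF(v)$, and maps $\rho^v_{ij}r^v_i$ and $r^v_j$. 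An $n$-independent local-update certificate for $\rk\partial_0$ is therefore exactly the unpartitioned maintenance that \Cref{thm:lower} argues requires work linear in the number of vertices, here $\Omega(n/v_{\max})$.

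Step~3 has the same hole, and there bounded $D$ is the wrong parameter. Forming the blocks of $\partial_0$ and $\partial_1$ does cost $O(|\NN|D^2d^3)$, but ranking them by cell-ordered elimination does not. Eliminating a cell couples all of its remaining neighbors, so fill is governed by the treewidth of the nerve's 1-skeleton, which is the sparsity graph of $\partial_0^{\top}\partial_0$. Every elimination order creates a dense front of size at least that treewidth, so for generic entries the work is $\Omega(\mathrm{tw}^3)$. Bounded $D$ does not bound treewidth. Under \Cref{def:decomp} every cross-cell edge makes its two cells adjacent in $\NN$, so the nerve contains the cell quotient of the input graph. Take a bounded-degree expander, which \Cref{rem:graph-families} lists among the intended inputs, cut into connected cells of at most $v_{\max}$ vertices. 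That quotient has bounded degree but treewidth $\Omega(|V|/v_{\max})$, because a width-$t$ tree decomposition of the quotient lifts to one of the expander of width below $(t+1)v_{\max}$. Your claim that each pivot meets $O(D^2)$ blocks therefore needs a bounded-treewidth or small-separator hypothesis on $\NN$, which neither you nor the paper supplies. Without it, no known method computes the exact rank of such a sparse block matrix in time linear in its dimension.

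There are two smaller slips, both inherited from the statement. First, your own count of $O(D^2)$ 2-simplices gives a touched block of $\partial_1$ with $O(D^2)$ row blocks against $O(D)$ column blocks. Eliminating a dense block of that shape takes $\Theta(D^4d^3)$ operations, and $D$ is a parameter of the bound, not a constant you may absorb into $D^3$. Second, a nerve edge can carry up to $v_{\max}$ shared vertices (two already in the $C_4$ example), each contributing $d$ rows. The column block also has width up to $v_{\max}d$, as you note, so even the $\partial_0$ block is not $O(Dd)$-dimensional in general. Neither slip affects $n$-independence.
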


\section{Complexity Summary}\label{sec:complexity}

\begin{table}[t]
\centering
\caption{Per-edit complexity comparison. Here $n = |V| + |E|$ is the
total complex size, $v_{\max}$ is the maximum cell size, $d$ is the
stalk dimension, and $D$ is the maximum nerve degree. In practice
$v_{\max} = 500$ and $d = 8$ are fixed constants; the nerve degree
$D$ is bounded by $v_{\max}(K-1)$ (\Cref{cor:mu-cap-D}) but can be
large empirically on scale-free graphs, and it enters only the
assembly rows; the update and local-recompute rows are
$D$-independent (\Cref{rem:nerve-pressure}). ``Update'' rows give
the cost to ingest one edit; ``query'' is the cost to report
$\dim H^1$ on demand (\Cref{rem:update-query}).}
\label{tab:complexity}
\begin{tabular}{lccc}
\toprule
\textbf{Method} & \textbf{Per-Edit Cost} & \textbf{In $n$} & \textbf{Notes} \\
\midrule
Batch recomputation (SVD) & $O(n^3 d^3)$ & $O(n^3)$ & \\
Rank-one SVD update        & $O(n^2 d^2)$ & $O(n^2)$ & \\
Incremental update (lazy) & $O(d^2)$ & $O(1)$ & Measured, update \\
Local recomputation (flush) & $O(v_{\max}^3 d^3)$ & $O(1)$ & Per dirty cell \\
Assembly (with certificate) & $O(D^3 d^3)$ & $O(1)$ & Per dirty cell \\
Assembly (full traversal) & $O(|\NN| D^2 d^3)$ & $O(n)$ & Per flush / query \\
\bottomrule
\end{tabular}
\end{table}

The key observation is that the cellular decomposition transforms a
global matrix problem into a collection of bounded local matrix
problems, with global cohomology recovered via Mayer-Vietoris
assembly at flush time. The partition structure ensures that each
edit touches at most a constant number of local problems, each of
constant size, and the assembly cost is bounded by a polynomial
in the nerve degree. This is the fundamental mechanism behind the
$O(n^3) \to O(1)$-in-$n$ reduction for lazy edit ingestion (i.e.\
for \emph{update} time); exact global assembly is deferred to
synchronization, where the implemented full-traversal pass costs
$O(n)$ per flush (so \emph{query} time is $O(n)$, per
\Cref{rem:update-query}) and a maintained certificate would reduce
this to $O(1)$ in $n$.

\section{Algebraic Lower-Bound Barrier}\label{sec:lower}

We now argue that the partition structure is not merely a
convenience but appears to be a \emph{necessity}: without it,
sublinear-time maintenance is obstructed for non-trivial sheaves.

\begin{theorem}[Lower Bound for Partitioned Complexes]
Any algorithm that maintains $\dim H^1(X; \FF)$ exactly under
single-cell edits must inspect at least $\Omega(1)$ cells per edit.
\end{theorem}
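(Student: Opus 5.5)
The plan is an adversary argument in the algebraic RAM model of \Cref{sec:model}. It shows that for every edit type (i)--(iv) of \Cref{lem:locality}, some input exists on which the reported value of $\dim H^1(X;\FF)$ must change. Hence no algorithm can answer correctly after the edit without reading data stored in the edited cell. Since the statement only claims $\Omega(1)$, it suffices to exhibit, for each edit type, two instances that agree on everything the algorithm could know without inspecting the host cell but have different values of $\dim H^1$.

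First, I fix a partitioned complex with bounded local geometry and a host cell $V_i$. I take two versions of $V_i$ that differ only in the data an edit of type (iv) touches, namely the restriction maps on a single edge $e$ that closes a cycle inside $V_i$, with $d \ge 1$. For the scalar case $d=1$, I choose a triangle in $V_i$ with two maps realizing holonomy $1$ and holonomy $\neq 1$ around the cycle. By the worked computation in \Cref{sec:example}, and since for $d=1$ the rank of $\delta^0$ on a cycle depends only on whether the holonomy equals $1$, the local rank $\rk(\delta^0_i)$ differs between the two versions. So does $\dim H^1(V_i;\FF)$.

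Second, I transfer this local difference to the global invariant. I use \Cref{prop:assembly-formula}: $\dim H^1(X;\FF)=\sum_i \dim H^1(V_i;\FF)+\dim\check H^1(\NN;\mathcal H^0)$. Changing the holonomy changes $\dim H^0(V_i)$ by the same amount as $\dim H^1(V_i)$, because $\dim C^0_i$ and $\dim C^1_i$ are fixed and both quantities are computed from the same rank. It can therefore also alter $\dim A^0$ and the \v{C}ech term, so I must rule out cancellation. The simplest route is to place $V_i$ as an isolated nerve vertex, so that the \v{C}ech term is unaffected. Alternatively I can compute directly on $X$: $\dim H^1 = \dim C^1-\rk\delta^0$, and a triangle with generic holonomy adds exactly one to the rank. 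With this choice, the two instances differ in the global $\dim H^1$ by exactly $1$.

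Third comes the indistinguishability step, which I expect to be the main obstacle to state rigorously. I have to argue that an algorithm which, after the edit, reads no memory belonging to cell $V_i$ (its local graph, local sheaf, or cached eigendata) has an identical computation trace on both instances. The reason is that the edit writes the new map into $V_i$'s storage and every other word of memory is identical. Such an algorithm therefore outputs the same number on both instances and errs on one of them, so at least one cell must be inspected. I would formalize "inspect a cell" as reading at least one word attributed to that cell. The subtlety is the edit operation itself, which necessarily touches the host. I handle this by charging that touch as the single inspection, which already gives the claimed $\Omega(1)$ bound. I then note that the same two-instance construction also covers edit types (i) and (iii) by inserting the closing edge, and covers type (ii) via the $C_4$ example of \Cref{rem:c4}, where the non-host cell must also be consulted.
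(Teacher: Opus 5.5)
Your proof is correct and is essentially the paper's indistinguishability argument, made more careful: you exhibit an explicit holonomy pair and isolate $V_i$ in the nerve so the \v{C}ech term cannot cancel the local change, whereas the paper argues only about the local rank $\rk(\delta^0_i)$ and never transfers the change to the global $\dim H^1$. Two minor cautions: first, since your two instances differ in the edit payload itself, what you actually force is reading the edit (hence your charging step), not the stored coboundary of $V_i$ that the paper has in mind, which you would get by fixing the new map on $e$ and varying the other two triangle maps; second, your closing asides do not follow from the construction, because a type (iii) insertion adds a pendant vertex that closes no cycle, and under your template (instances differing only in the edit) the non-host cell's stored data is identical in both instances, so the $C_4$ example cannot force consulting it.
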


\begin{proof}
Consider an edit that adds an edge to cell $V_i$. This changes
$\delta^0_i$ (one new block row). Without inspecting $V_i$'s
updated coboundary, the algorithm cannot determine whether the
new edge changes $\rk(\delta^0_i)$ (which determines $h^1_i$).
Hence at least one cell must be inspected per edit.

For Case B edits (cross-cell), two cells are modified, so at
least two cells must be inspected. By the Locality Lemma, at
most two cells are affected, so the algorithm is optimal for
Case B.
\end{proof}

\begin{remark}
This lower bound is tight: our algorithm inspects exactly the
cells identified by the Locality Lemma (1 for Cases A, C, D and
2 for Case B). No algorithm can do better while maintaining
exact $H^1$.
\end{remark}

\begin{proposition}[Adversarial Barrier for Unpartitioned
Complexes]
\label{thm:lower}
Fix $d \geq 2$. Consider the following online problem in the
algebraic RAM model (\Cref{sec:model}): an adversary presents a
sequence of edges $e_1, e_2, \ldots$ on $n$ vertices, together
with restriction maps $\rho_{e_t} \in \R^{d \times d}$ for each
edge, where each $\rho_{e_t}$ is chosen adversarially after
observing the algorithm's state. The algorithm must report
$\dim H^1(X_t; \FF_t)$ after each insertion $e_t$, where $X_t$
is the complex after $t$ insertions and $\FF_t$ is the sheaf
with the given restriction maps.

Then, in this adversarial model, any deterministic algorithm
solving this problem on an unpartitioned complex (a single cell
containing all $n$ vertices) appears to require $\Omega(n)$
algebraic operations per edge insertion in the worst case, in the
sense made precise in the proof.
\end{proposition}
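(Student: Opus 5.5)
The plan is an adversary argument of the "every piece of data must be read" type, carried out in the algebraic RAM model of \Cref{sec:model}. I will build a hard instance in which the answer to a single query depends on every coordinate of a length-$\Theta(n)$ product of restriction maps. Take $d \geq 2$ and let the adversary first lay down a path $v_1, v_2, \ldots, v_n$. Each edge $e_t = (v_t, v_{t+1})$ carries the pair of maps $(\mathrm{id}, \rho_{e_t})$, where $\rho_{e_t}$ is invertible and generic (say, a rotation by an adversarially chosen angle, so that it satisfies the Purity Gate of \Cref{def:purity}).

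On a tree every restriction map is invertible, so $H^1 = 0$ and the answers during this phase are all trivial. The adversary then inserts the closing edge $(v_n, v_1)$ with a map $\rho^\ast$. For the resulting cycle, a global section exists exactly when the holonomy condition holds. The key computation uses $\dim H^1 = \dim C^1 - \rk \delta^0 = nd - (nd - \dim H^0)$, so that
\[
  \dim H^1 = \dim \ker\big(\rho^\ast - \Phi\big), \qquad \Phi = \rho_{e_{n-1}} \cdots \rho_{e_1}.
\]
This reduction is routine linear algebra; I will verify it through the Euler characteristic of the cycle.

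Next comes the adversary step. Suppose that, after the closing edge is presented, the algorithm performs $o(n)$ operations. Its output is then a function computed by an algebraic decision tree of depth $o(n)$ over the stored state together with $\rho^\ast$. The adversary chooses $\rho^\ast$ after seeing the state, and it can choose it so that $\rho^\ast - \Phi$ is singular exactly when a particular coordinate of $\Phi$ takes a particular value. Because $d \geq 2$ and the maps do not commute, $\Phi$ depends non-degenerately on each individual factor $\rho_{e_t}$.

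The argument then proceeds by perturbation. If the algorithm never touches memory cells influenced by $\rho_{e_t}$, the adversary perturbs $\rho_{e_t}$ to $\rho_{e_t}'$ with $\Phi' \neq \Phi$, chosen so that the two answers ($0$ versus $\geq 1$) differ while the algorithm's computation path is unchanged. This is a contradiction. To get a per-edge bound rather than a single-query bound, I repeat the phase. The adversary deletes nothing; instead it keeps inserting chords whose maps force a fresh holonomy test around a new long cycle, so that $\Omega(n)$ of the queries each require $\Omega(n)$ work.

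The main obstacle is the step "the algorithm must read $\Omega(n)$ cells". An algorithm may maintain the running product $\Phi$ in $O(d^3)$ per edit, which would defeat the naive argument. My first attempt will be to exploit non-commutativity together with adversarially placed chords. For this I insert edges in an order where each new edge closes a cycle through an arbitrary earlier segment (for example, chords $(v_i, v_n)$ for adversarially chosen $i$). The algorithm must then answer holonomy queries for arbitrary suffix products $\rho_{e_{n-1}}\cdots\rho_{e_i}$, one per chord, with $i$ revealed only when the chord is inserted. This is an online range-product problem over $GL_d$ that I will try to bound.

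I expect to obtain rigorous $\Omega(n)$ only under an additional restriction: the algorithm stores just the current coboundary data, so that no preprocessing such as prefix products is allowed. This is why the statement says "appears to require ... in the sense made precise in the proof". I will state that restricted model explicitly and prove the bound there, rather than claim an unconditional cell-probe lower bound.
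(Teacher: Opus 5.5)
There is a genuine gap. The perturbation step fails, and it fails on your specific instance, not just in general. The move ``perturb an unread $\rho_{e_t}$ without changing the computation path'' is a static decision-tree argument. In the online problem, $\rho_{e_t}$ was presented and processed at time $t$, so perturbing it changes the stored state, and a single stored word (an entry of $\Phi$) depends on every factor at once.

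You identify this obstacle, but the chord strategy does not escape it. Because your path maps are invertible, every section is determined by $x_{v_1}$ via $x_{v_j} = P_j x_{v_1}$. The transport matrices $P_j$ (with your convention, $P_j = \rho_{e_{j-1}}\cdots\rho_{e_1}$) can be stored during the path phase at $O(d^3)$ per edge. A chord $(v_i,v_j)$ carrying maps $A$ at $v_i$ and $B$ at $v_j$ then needs only the update $S \leftarrow S \cap \ker(B P_j - A P_i)$ on a subspace $S \subseteq \R^d$ of admissible values of $x_{v_1}$. After it, $\dim H^1 = Ed - Vd + \dim S$. That is $O(d^3)$ per insertion whatever chord maps the adversary picks, adaptively or not. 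Your ``online range-product problem over $GL_d$'' is therefore trivial, since suffix products are $P_j P_i^{-1}$, and this family cannot yield any $\Omega(n)$ bound.

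More generally, suppose all restriction maps are invertible. Then a union--find whose links carry $GL_d$ labels, plus one subspace of $\R^d$ per component, maintains $\dim H^0$, and hence $\dim H^1 = Ed - Vd + \dim H^0$. It does so in $O(d^3 \log n)$ worst-case time per insertion. So the paper's constant-sheaf Union-Find shortcut extends to every sheaf with invertible maps, and non-commutativity buys nothing. Incidentally, planar rotations commute, so your non-commutativity premise is false for $d = 2$ anyway. A genuine barrier has to use singular restriction maps, arranged so that sections are not determined by their value at a spanning-tree root.

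Your fallback does not rescue the statement. If the algorithm may retain only the current coboundary data between queries, the $\Omega(n)$ bound reduces to the observation that recomputation from scratch must read every path map. That model excludes every incremental algorithm, including the paper's own, which caches local coboundaries and their cohomological data. So it proves a different and much weaker claim than a bound on ``any deterministic algorithm.''

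For comparison, the paper's proof uses no concrete instance. It argues informally that an adaptive adversary who observes the state can make the new block rows nearly dependent on $\Omega(n)$ row blocks the algorithm has not examined. It then concedes explicitly that this shows only that no \emph{known} shortcut avoids the work, and that it is not a formal reduction; that heuristic is the ``sense made precise'' in the statement. The union--find observation above shows that even this heuristic is only plausible for singular maps. A rigorous version would need singular-map instances together with a lower-bound technique that controls arbitrary state computed from earlier updates. A static read-the-input perturbation argument cannot provide the latter.
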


\begin{proof}
For a non-trivial sheaf ($d \geq 2$, adversarially chosen
restriction maps), $\dim H^1$ depends on $\rk(\delta^0)$, which
in turn depends on the specific restriction maps and not merely
on graph combinatorics. Adding edge $e_t$ appends $d$ rows to
$\delta^0 \in \R^{E_t d \times Vd}$. The adversary chooses
$\rho_{e_t}$ after observing the algorithm's data structures, so
no preprocessing can predict which existing rows interact with
the new rows.

Determining whether the $d$ new rows change $\rk(\delta^0)$
requires computing their linear independence against the existing
$O(nd)$ rows. In the algebraic RAM model, each inner product
between a new row block and an existing row block costs
$\Omega(d)$ operations, and the adversary can force the
algorithm to consult $\Omega(n)$ existing blocks by choosing
$\rho_{e_t}$ to be nearly dependent on a specific subset of
existing rows that the algorithm has not yet examined. Since the
restriction maps are non-identity, the rank cannot be reduced to
a combinatorial count (unlike the constant-sheaf case where
$\dim H^1 = |E| - |V| + c$ is maintainable via Union-Find in
$O(\alpha(n))$). Hence, in this adversarial model, any correct
algorithm appears to require $\Omega(n)$ operations in the worst
case.

Two caveats qualify this argument. First, it assumes an
\emph{adaptive} adversary that chooses each $\rho_{e_t}$ after
observing the algorithm's state; this is a stronger adversary than
the \emph{oblivious} model (in which the edit-and-restriction
sequence is fixed in advance) and than the reduction-based lower
bounds common in dynamic-algorithm theory, e.g.\ from the online
matrix-vector (OMv) conjecture \cite{henzinger2015omv}. Second, the
argument shows that no \emph{known} algebraic shortcut avoids the
$\Omega(n)$ work, rather than proving that none exists; it rests on
the algebraic structure of rank updates under adversarial
restriction maps and does not constitute a formal computational
complexity reduction. An oblivious-adversary bound, an OMv-style
reduction, and a tight information-theoretic lower bound in the
algebraic RAM model all remain open.
\end{proof}

\begin{remark}[Constant sheaves]
For constant sheaves ($d = 1$, identity restriction maps),
$\dim H^1 = |E| - |V| + c$ where $c$ is the number of connected
components. This is maintainable in $O(\alpha(n))$ amortized time
via Union-Find without any partition structure. The barrier
therefore applies to non-trivial sheaves where $H^1$ depends on
the algebraic structure of the restriction maps, not merely on
graph combinatorics.
\end{remark}

This adversarial barrier, combined with the $O(1)$ upper bound
of \Cref{thm:incremental}, suggests the following characterization
for non-trivial sheaves: \emph{$O(1)$-in-$n$ maintenance of
$H^1$ for cellular sheaves with non-identity restriction maps
requires the complex to admit a cellular decomposition with bounded
local geometry.} A formal reduction-based lower bound (for instance
from the OMv conjecture \cite{henzinger2015omv}), or a bound against
an oblivious adversary, remains an open problem.

\section{Experimental Validation}\label{sec:experiments}

\subsection{Setup}

All experiments were run on a single workstation with an Intel
Core i9-13900H processor (14 cores, 20 threads), 64 GB DDR5 RAM,
and an NVIDIA RTX 4060 GPU (not used for the cohomology computation,
which is entirely CPU-bound). The implementation is in Python 3.11
with NumPy for linear algebra. Graphs are generated using the
Barabasi-Albert preferential attachment model with parameter
$m = 3$ (each new vertex attaches 3 edges). All benchmark runs
used $v_{\max} = 500$, stalk dimension $d = 8$, seed $= 42$.

\subsection{Timing Methodology}

The reported per-edit latency measures the \emph{lazy-mode streaming
update cost only}: case classification (hash lookup), graph mutation
(add vertex/edge to local adjacency list), restriction map
initialization (pool reference, not array copy), and dirty-set
insertion. This is the \emph{update} cost of processing one edge in
the streaming pipeline, in the sense of \Cref{rem:update-query}.

The per-edit latency does \emph{not} include the flush cost. Flush
is a deferred batch operation comprising Phase 1 (local eigensolves)
and Phase 2 (global assembly), and it is also the cost paid by a
\emph{query} that reports $\dim H^1$. At $V = 5 \times 10^6$, a
typical flush dirtying approximately 1{,}660 cells completed in
approximately 9.4 minutes; the per-cell ARPACK eigensolve at
$d = 8$ (local cell dimension $\sim$4{,}000) dominates this cost at
roughly 0.3 seconds per cell. In
eager mode, where the eigensolve and assembly execute immediately
after each edit, the per-edit cost is dominated by this same
local ARPACK eigensolve. The lazy per-edit update cost (35
$\mu$s) is $O(1)$ in $n$, as is the eager per-edit eigensolve cost;
the full-traversal assembly contribution to a flush is $O(n)$, as
analyzed in \Cref{prop:assembly} and \Cref{rem:update-query}.

Timing uses \texttt{time.perf\_counter()} with microsecond
resolution. Each benchmark run processes the complete edge stream
and reports the median per-edit latency over all edges.

\subsection{Drift Verification Methodology}

For scales $V \leq 250{,}000$, drift was verified by full
monolithic batch recomputation: the global coboundary matrix
$\delta^0$ was constructed from the final graph and sheaf data,
its rank was computed via dense SVD (NumPy \texttt{linalg.svd}),
and $\dim H^1$ was compared to the incremental result after
full flush. For scales $V \geq 10^6$, where monolithic SVD of
the full coboundary matrix is infeasible, drift was verified by
full-recompute replay through the batch partition-and-assemble
pipeline (constructing all cells from scratch and running
Mayer-Vietoris assembly), and the resulting $\dim H^1$ was
compared to the incremental result. This full-recompute
verification was carried out through $V = 5 \times 10^6$, where
the recomputed $\dim H^1$ equalled the incremental result exactly
(post-flush $H^1 = 103{,}690$, verify $H^1 = 103{,}690$, drift $= 0$;
the verification eigensolve across all 25{,}473 cells completed in
approximately 2 hours; this verification was performed on the
pre-multiplicity-cap construction and is being revalidated on the
current code, which enforces $\mu_{\max} = 3$ and produces
62{,}303 cells). At all scales from $V = 10^3$ through
$V = 5 \times 10^6$, zero drift was observed.
The Zero-Drift Theorem (\Cref{thm:drift}) is a structural
guarantee that holds at every scale by induction on flushes;
the empirical verification confirms that the
implementation matches the theorem's prediction.

\subsection{Contradiction Localization}\label{sec:localization}

The zero-drift result establishes that the incremental algorithm
computes the correct global $\dim H^1$. We now demonstrate the
operational consequence that motivates the verification setting: a
single structural contradiction introduced into an otherwise
consistent sheaf is detected by recomputing a vanishing fraction of
the complex, with no global recomputation.

The experiment starts from a globally consistent sheaf (constant
sheaf, identity restriction maps, so $\dim H^0$ equals the number of
connected components and there are no spurious obstructions). A
single \emph{poisoned} edge is then inserted: its restriction maps
are set to a cyclic permutation on one endpoint, which makes the
local data around that edge impossible to extend to a globally
consistent section. This is the smallest possible structural
contradiction: one edge whose constraints cannot be satisfied
together with its neighborhood.

The contradiction manifests as a collapse in global sections. At
both scales tested, the number of independent global sections
$\dim H^0$ drops from 8 to 1 (a delta of $-7$) at the moment the
poisoned edge is ingested, and the first cohomology changes
correspondingly. Crucially, the algorithm localizes the change: by
the Locality Lemma (\Cref{lem:locality}), only the host cell of the
poisoned edge is marked dirty, so detection recomputes exactly one
cell rather than the whole complex.

\begin{table}[t]
\centering
\caption{Contradiction localization. A single poisoned edge is
injected into a globally consistent sheaf. ``Cells recomputed'' is
the number of cells whose cohomology is recomputed to detect the
contradiction; ``fraction touched'' is that count divided by the
total cell count. The $\dim H^0$ collapse is the detection
signature. No global recomputation is performed at either scale.}
\label{tab:localization}
\begin{tabular}{lrr}
\toprule
 & $V = 21{,}309$ & $V = 5{,}000{,}000$ \\
\midrule
Total cells              & 609     & 25{,}473 \\
$\dim H^0$ before        & 8       & 8 \\
$\dim H^0$ after         & 1       & 1 \\
Cells recomputed         & 1       & 1 \\
Fraction of cells touched & 0.16\% & 0.0039\% \\
Global recompute         & no      & no \\
Signed receipt           & yes     & yes \\
\bottomrule
\end{tabular}
\end{table}

At $V = 5 \times 10^6$ (\Cref{tab:localization}), the contradiction
is detected by recomputing 1 cell out of 25{,}473, or 0.0039\% of
the complex, with the global section count collapsing from 8 to 1.
The detection is deterministic and reproducible, and it emits an
Ed25519-signed receipt recording the result. This is the
verification primitive the bounded-local-geometry structure is built
to support: detecting a global inconsistency is local work, not a
global recomputation, because the obstruction is confined to the
cell that carries the offending constraint and surfaced through the
assembly only where the nerve couples that cell to its neighbors.

Two points of measurement discipline. First, the detection figures
above are reported on the pre-multiplicity-cap construction
(25{,}473 cells at $V = 5 \times 10^6$); the localization ratio is a
structural property of the partition and the same one-cell locality
holds under the current construction, with the denominator changing
to the current cell count. Second, the detection \emph{wall-clock}
at $V = 5 \times 10^6$ (approximately 11.5 seconds) measures a full
recomputation of the affected cell plus cohomology reassembly at
detection time; it is not the per-edit streaming latency of
\Cref{tab:benchmark} (35 $\mu$s), which measures a different
operation, and the two should not be compared directly. At
$V = 21{,}309$ the detection wall-clock is 9.6 ms. The claim this
experiment supports is \emph{locality} (one cell of many, no global
recomputation), not detection speed at the largest scale.

\subsection{Scale Ladder}

We ran the streaming construction and incremental update pipeline
on Barabasi-Albert random graphs at scales from $V = 1{,}000$ to
$V = 5{,}000{,}000$.

\begin{table}[t]
\centering
\caption{Streaming benchmark results. ``Edges'' is the total
number of streaming edge insertions processed by the builder.
``Median $\mu$s/edit'' is the
median per-edit update latency in lazy mode (excluding flush).
``Drift'' is $h^1_{\text{inc}} - h^1_{\text{batch}}$ after full flush
and batch verification. ``Store MB'' is the Restriction Store memory
footprint.}
\label{tab:benchmark}
\begin{tabular}{rrrrrr}
\toprule
$V$ & Edges & Cells & Median $\mu$s/edit & Drift & Store MB \\
\midrule
1,000    & 2,997   & 7    & 28  & 0 & 0.03 \\
5,000    & 14,997  & 30   & 31  & 0 & 0.06 \\
21,000   & 62,997  & 123  & 34  & 0 & 0.12 \\
100,000  & 299,997 & 563  & 38  & 0 & 0.22 \\
250,000  & 749,997 & 1,402 & 42 & 0 & 0.31 \\
1,000,000 & 2,999,997 & 5,584 & 63$^*$ & 0 & 0.42 \\
5,000,000 & 14,999,994 & 62,303 & $\approx 35^\ddagger$ & $0^\dagger$ & 0.50 \\
\bottomrule
\end{tabular}
\vspace{2pt}
{\footnotesize $^*$The $V = 10^6$ run used a batch-then-stream
pipeline; all other rows use the streaming-from-zero pipeline.
See \Cref{sec:anomaly}.\\
$^\dagger$Zero drift verified by full independent recomputation
at this scale (post-flush $H^1 = $ verify $H^1 = 103{,}690$
on the pre-multiplicity-cap code; revalidation on the current
construction, which enforces $\mu_{\max} = 3$, is in progress).\\
$^\ddagger$Per-edit lazy latency is $O(d^2)$ by
\Cref{thm:lazy}, independent of cell count; the 35 $\mu$s
figure is from the pre-$\mu$-cap run and is expected to hold
on the current construction (confirmed at $V = 21{,}000$),
with formal revalidation in progress.}
\end{table}

\begin{figure}[t]
\centering
\begin{tikzpicture}
\begin{semilogyaxis}[
  width=0.85\textwidth,
  height=6cm,
  xlabel={Total vertices $V$},
  ylabel={Cost per edit ($\mu$s)},
  xmode=log,
  log basis x=10,
  ymin=0.0005,
  ymax=2000000,
  legend pos=north west,
  legend style={font=\small},
  grid=major,
  grid style={gray!20},
  tick label style={font=\small},
  label style={font=\small},
]

\addplot[
  thick, blue, mark=*, mark size=2.5pt
] coordinates {
  (1000,28) (5000,31) (21000,34) (100000,38)
  (250000,42) (1000000,63) (5000000,35)
};
\addlegendentry{Incremental $O(1)$ (measured)}

\addplot[
  thick, red, dashed, mark=triangle*, mark size=2.5pt
] coordinates {
  (1000,0.001) (5000,0.125) (21000,9.3)
  (100000,1000) (250000,15625) (1000000,1000000)
};
\addlegendentry{Batch $O(n^3)$ (projected)}

\end{semilogyaxis}
\end{tikzpicture}
\caption{Per-edit update latency (log scale) vs.\ total complex size.
The incremental algorithm (solid, blue) maintains 28--63 $\mu$s
per edit independent of $n$, while batch recomputation (dashed, red)
grows as $O(n^3)$. At $V = 5 \times 10^6$, the incremental
cost (35 $\mu$s) is \emph{lower} than at $V = 10^6$ (63 $\mu$s);
see \Cref{sec:anomaly}.}
\label{fig:benchmark}
\end{figure}

\subsection{Key Observations}

\begin{enumerate}[leftmargin=2em]
\item \textbf{$n$-independence}: The median per-edit latency does
not grow with $n$. The per-edit cost remains bounded by a constant
independent of $n$ across all 7 scale points, confirming the $O(1)$
claim of \Cref{thm:incremental}.

\item \textbf{Zero drift}: The incrementally maintained $H^1$
(computed via local eigensolves followed by Mayer-Vietoris assembly)
agrees exactly with full batch recomputation through
$V = 5 \times 10^6$ (verified incremental-equals-batch;
post-flush $H^1 = $ verify $H^1 = 103{,}690$ at 5M),
confirming \Cref{thm:drift}
empirically (at synchronization).

\item \textbf{Sublinear memory}: The Restriction Store grows
sublinearly, reaching only 0.50 MB at $V = 5 \times 10^6$. This
is because the pool of unique restriction maps is fixed at 1,024
entries; additional edges reuse existing maps.

\item \textbf{35 microseconds at 5 million vertices}: This is the
lazy-mode cost of processing a single edge in the streaming
pipeline, including case classification, graph mutation, restriction
map initialization (by reference, not copy), and dirty-set update.
For context, a single L3 cache miss on modern hardware costs
approximately 30--50 nanoseconds; our per-edit cost is roughly
700 cache misses, consistent with the $O(d^2)$ data movement
for two restriction-map ID lookups (at $d = 8$) plus hash table
operations.
\end{enumerate}

\subsection{The $V = 5M$ Anomaly}\label{sec:anomaly}

The fact that $V = 5 \times 10^6$ is faster than $V = 10^6$
deserves explanation. This is \emph{not} a measurement artifact.
The $V = 5M$ run uses the StreamingBuilder (streaming from zero
with on-demand cell creation), while the $V = 1M$ run used an
earlier batch-then-stream pipeline with a different partition
initialization path. The streaming-from-zero path eliminates
a monolithic graph construction phase and produces cells
with tighter vertex packing (fewer boundary duplicates relative
to cell size), leading to fewer router lookups and better
cache locality in the per-edit data-structure operations.

The per-edit cost remains bounded by a constant independent of
$n$ across all scale points, consistent with the $O(1)$ claim.
The slight decrease at $V = 5M$ relative to $V = 1M$ is
attributable to improved cache locality and tighter cell packing
in the streaming-from-zero pipeline, not to a property of the
asymptotic bound.

\subsection{Numerical Conditioning}\label{sec:conditioning}

The zero-drift theorem (\Cref{thm:drift}) is proved in exact
arithmetic. In the finite-precision implementation, rank is
determined by counting eigenvalues of the local Laplacian $L_i$
that exceed a numerical tolerance $\epsilon$. We report the
distribution of local Laplacian condition numbers
$\kappa(L_i) = \lambda_{\max}(L_i) / \lambda_{\min}^+(L_i)$
(the ratio of the largest eigenvalue to the smallest nonzero
eigenvalue).

The condition number $\kappa(L_i)$ is a per-cell property: it
depends only on the local sheaf of cell $V_i$, whose size is
bounded by $v_{\max}$ and whose stalk dimension is $d$, and not on
the total complex size $n$. The distribution of $\kappa$ over cells
is therefore governed by the bounded-local-geometry parameters,
which are held fixed across all scales ($v_{\max} = 500$, $d = 8$);
it does not change with $n$. We accordingly characterize the
distribution at a representative scale, $V = 10^5$ (1{,}053 cells,
with the same cell-size and stalk-dimension parameters used at
every other scale), where the full spectrum of every cell Laplacian
is computed by dense eigendecomposition. The median condition
number was $\kappa = 1.53 \times 10^{4}$, the 95th percentile was
$2.91 \times 10^{6}$, and the maximum was $2.97 \times 10^{7}$. All
1{,}053 cells yielded a finite condition number, the rank tolerance
was $\epsilon = 10^{-6}$, and no cell required a rank-tolerance
reset: at every cell the smallest nonzero eigenvalue exceeded
$\epsilon$ by several orders of magnitude, so the numerical rank
agreed with the exact rank.

The Purity Gate (\Cref{def:purity}) bounds the spectral norm
of every restriction map at $\rho_{\max}$, which controls the
largest eigenvalue of $L_i$ and prevents norm amplification
along restriction chains. However, conditioning also depends on
the \emph{smallest} nonzero eigenvalue, which the Purity Gate
does not directly control. For sheaves where near-zero
eigenvalues appear frequently (e.g., restriction maps close to
rank-deficient), the numerical rank determination becomes
sensitive to the tolerance $\epsilon$. In our benchmarks with
Purity-Gate-certified random orthogonal restriction maps
(spectral norm $\leq 0.68$), the smallest nonzero eigenvalues
remained well separated from zero across all cells, and the
numerical rank agreed with exact rank at every flush.

\section{Applications}\label{sec:applications}

The incremental sheaf cohomology maintenance algorithm has
potential applicability in several domains where structured data
evolves over time and consistency must be continuously verified:

\textbf{Knowledge Graph Consistency.}
In large-scale knowledge graphs, entities and relations are
continuously updated. A cellular sheaf over the knowledge graph
encodes consistency constraints via restriction maps. Nonzero
$H^1$ detects contradictions (e.g., an entity simultaneously
classified as active and deprecated). Incremental maintenance
allows continuous consistency checking without full
recomputation as the graph evolves.

\textbf{Document Verification.}
Given a source document and a derived document (e.g., a
contract and a summary), a cellular sheaf encodes the
faithfulness constraints between source and derived statements.
Incremental updates allow re-verification when either document
is modified, at the cost of a single local eigensolve rather
than a full recomputation.

\textbf{Sensor Network Coverage.}
In sensor networks, the coverage complex evolves as sensors are
added, removed, or repositioned. Sheaf cohomology detects
coverage gaps \cite{ghrist2008barcodes}. Incremental maintenance
allows real-time gap detection as the network topology changes.

\textbf{Distributed Consensus.}
In multi-agent systems, a cellular sheaf over the communication
graph encodes the consensus constraints between agents. Sheaf
cohomology detects disagreements. Incremental maintenance
allows streaming consensus monitoring as agents join, leave,
or update their states.

\textbf{Regulatory Compliance.}
In regulatory settings, compliance constraints evolve as
regulations change. A cellular sheaf encoding the constraint
structure allows incremental re-verification of compliance
when a regulation is amended, without re-auditing the entire
constraint graph.

\section{Related Work}\label{sec:related}

\textbf{Computational Sheaf Theory.}
The computational aspects of cellular sheaves have been developed
by Curry \cite{curry2014sheaves}, Robinson
\cite{robinson2014topological}, and Ghrist
\cite{ghrist2008barcodes, ghrist2014elementary}. These works
establish the algebraic framework and demonstrate applications,
but do not address incremental maintenance under dynamic updates.

\textbf{Sheaf Laplacians and Spectral Theory.}
Hansen and Ghrist \cite{hansen2020opinion, hansen2019toward}
develop the spectral theory of sheaf Laplacians and their
applications to opinion dynamics and distributed optimization.
The sheaf Laplacian is central to our local eigensolve
procedure, but their work considers the Laplacian as a fixed
object rather than one maintained incrementally.

\textbf{Persistent and Dynamic Homology.}
The persistent homology pipeline
\cite{edelsbrunner2010computational, carlsson2009topology}
computes topological invariants across a filtration.
Vineyards \cite{cohen2006vines} and zigzag persistence
\cite{carlsson2010zigzag} handle certain dynamic settings.
However, these algorithms are specific to simplicial
homology and do not generalize to sheaf cohomology, which
involves stalks and restriction maps rather than simple
boundary operators.

\textbf{Dynamic Graph Algorithms.}
The dynamic graph algorithms literature
\cite{eppstein1997sparsification, italiano2006fully,
henzinger1999randomized} maintains graph properties
(connectivity, minimum spanning trees, shortest paths) under
edge insertions and deletions in polylogarithmic amortized
time. Our work extends this paradigm to a cohomological
invariant, which requires maintaining a matrix factorization
(the rank of the coboundary matrix) rather than a combinatorial
property.

\textbf{Incremental Matrix Factorization.}
Low-rank updates to SVD \cite{brand2006fast} and Cholesky
factorizations \cite{bunch1978rank} can process rank-one
perturbations in $O(n^2)$ time. Applied to the global
coboundary matrix, this gives $O(n^2)$ per edit. Our approach
avoids the global matrix entirely, achieving $O(1)$ per edit
by exploiting locality.

\textbf{Sheaf Neural Networks.}
Bodnar et al.\ \cite{bodnar2022neural} and Hansen and Gebhart
\cite{hansen2020sheafnn} incorporate sheaf structure into
graph neural network architectures. These works use
sheaf Laplacians as diffusion operators but compute them
once at initialization. Our incremental algorithm could
serve as a preprocessing layer for these architectures,
enabling them to operate on evolving graphs without
costly reinitialization.

\section{Discussion}\label{sec:discussion}

\subsection{When Locality Fails}

The bounded local geometry assumption is not always natural.
In complexes with power-law degree distributions (common in
social and biological networks), a naive vertex-based partition
may produce cells with very different sizes or high nerve
degree. The Fiedler bisection and BFS partition strategies
mitigate this by producing balanced splits, but the resulting
nerve complex may still have $D = O(\log n)$ in pathological
cases. If $D$ grows as $O(\log n)$, the per-edit bound becomes
polylogarithmic in $n$ through the $D$-dependent assembly term,
and the strict $O(1)$-in-$n$ claim no longer holds.

The vertex multiplicity cap $\mu(v) \leq K = 3$, enforced by
both the batch and streaming construction paths, resolves this
concern unconditionally: \Cref{cor:mu-cap-D} establishes
$D \leq v_{\max} \cdot (K - 1)$, which is $O(1)$ in $n$
regardless of the input graph's degree distribution. This
bound applies to all graph families, including
Barabasi--Albert, power-law, and arbitrary topologies,
without requiring bounded maximum degree $\Delta$. The
multiplicity cap also bounds the nerve dimension at $K - 1 = 2$,
ensuring that the assembly complex of
\Cref{def:assembly-complex} has at most 3 nonzero terms.

For graph families with bounded maximum degree $\Delta$,
\Cref{prop:bounded-D} provides the tighter bound
$D \leq v_{\max} \cdot \Delta$, which may be smaller than
$v_{\max} \cdot (K - 1)$ when $\Delta < K - 1$. This covers
$d$-regular graphs, lattice graphs, bounded-degree expanders,
and Watts--Strogatz small-world graphs.

\subsection{Higher Cohomology}

This paper addresses $H^1$ exclusively, and the restriction to
1-dimensional complexes is not incidental. On a graph there are no
2-cells, so $\delta^1 = 0$ and $H^1 = C^1 / \im \delta^0 =
\operatorname{coker} \delta^0$; the entire invariant is then a
single rank, $\dim H^1 = \dim C^1 - \rk(\delta^0)$, which is exactly
what makes the local eigensolve-and-glue strategy of this paper
possible. For $k \geq 2$ the complex carries genuine 2-cells,
$\delta^1 \neq 0$, and $H^1 = \ker \delta^1 / \im \delta^0$ depends
on the interplay of \emph{two} coboundary operators rather than the
rank of one. The local problems are then no longer pure
coboundary-rank computations; the full Hodge Laplacian
$L_1 = \delta^0 (\delta^0)^\top + (\delta^1)^\top \delta^1$ couples
both operators, and the eigensolve cost for higher-dimensional local
problems may be superlinear in the local cell size. The Locality
Lemma generalizes naturally (an edit to a $k$-cell affects only cells
in its star), and we expect the partition-and-localize strategy to
extend, but the two-operator structure changes the local computation
enough that a separate analysis is required; we leave it to future
work.

\subsection{Deletions}

The current algorithm handles insertions and restriction map
updates but not edge or vertex deletions. Deletion is more
subtle: removing an edge may cause a connected cell to become
disconnected, requiring a cell merge or re-partition. We have
a design for $\textsc{RemoveEdge}$ based on local connectivity
checking (BFS within the host cell) followed by conditional
cell merge, but the implementation and formal analysis are
deferred to a subsequent paper.

\subsection{Global Assembly via Mayer-Vietoris}

The flush procedure includes a global assembly phase that
recovers $\dim H^1(X; \FF)$ from the local cell data and
boundary restriction maps via the Mayer-Vietoris exact sequence
over the nerve complex. This assembly is necessary because
sheaf cohomology is a global invariant: cycles in the graph that
span multiple cells produce $H^1$ classes that are invisible to
any single cell's local computation. For example, a cycle graph
$C_4$ partitioned into two tree-like cells has
$\sum h^1_i = 0$ locally, but $\dim H^1(C_4) = 1$ globally
(\Cref{rem:c4}). The Mayer-Vietoris assembly detects such
cross-cell cycles through the boundary data and connecting
homomorphisms, as made explicit by the assembly complex of
\Cref{def:assembly-complex} and the dimension formula of
\Cref{prop:assembly-formula}.

The implementation evaluated here performs assembly as a full
nerve traversal at each synchronization point. This costs
$O(|\NN| \cdot D^2 \cdot d^3)$ per flush, which is $O(n)$ rather
than $O(1)$ in $n$, but is dominated by the local eigensolve
cost in practice. An incremental assembly certificate that
maintains a running rank factorization of the \v{C}ech
differentials $\partial_0$ and $\partial_1$ would reduce the
per-flush assembly cost to
$O(k_d \cdot D^3 \cdot d^3)$, achieving $O(1)$ in $n$.
Designing and implementing such a certificate is an important
direction for future work.

\subsection{Toward $V = 10^9$}

The data structures suggest a plausible path toward $V = 10^9$
vertices on a large-memory single server node. The memory
bottleneck is the vertex router ($O(V)$ hash map entries) and
the cell manager ($O(V / v_{\max})$ cells). At $V = 10^9$ with
$v_{\max} = 500$, this is approximately 2 million cells,
requiring an estimated 50--80 GB total memory, within the reach
of commodity 256 GB server nodes. We have not yet run at this
scale; we report it as a design target rather than a measured
result.

\subsection{Limitations}

The zero-drift theorem (\Cref{thm:drift}) holds exactly at
synchronization points (after $\textsc{Flush}$), not between
them; between flushes, the dirty set may contain stale
cohomology values (condition (I2) of the dirty-set invariant).
The empirical validation uses synthetic Barabasi-Albert graphs;
while these exhibit realistic degree distributions, they do not
capture all structural properties of domain-specific complexes
(e.g., knowledge graphs with hierarchical schemas). The current
algorithm handles insertions and restriction map updates but not
deletions, which require additional machinery for disconnection
detection and cell merging. The partition strategy (Fiedler
bisection plus BFS balanced split) is not claimed to be optimal;
better partitions may yield smaller constants in the $O(1)$
bound. The streaming builder enforces $v_{\max}$ via
splits and the vertex multiplicity cap $\mu(v) \leq K$
provides an a priori bound $D \leq v_{\max} \cdot (K-1)$
on the nerve degree for all input streams
(\Cref{cor:mu-cap-D}); at $V = 5 \times 10^6$ the empirical
$D_{\max} = 992$ approaches but does not exceed the
worst-case bound of $v_{\max}(K-1) = 1000$. The Purity Gate (\Cref{def:purity}) is a constraint on
the input sheaf; sheaves violating this bound may produce
ill-conditioned local Laplacians and exceed the stated
complexity bounds. The zero-drift theorem
(\Cref{thm:drift}) is proved in the algebraic RAM model;
in finite-precision arithmetic, rounding errors in local
eigensolves may accumulate across flushes over very long
edit streams (billions of edits), potentially causing the
numerical rank tolerance to disagree with a fresh batch
computation. For infinite-stream scenarios, periodic
algebraic resets (full batch recomputation at scheduled
intervals) or extended-precision arithmetic in the
eigensolve may be warranted.

\section{Conclusion}\label{sec:conclusion}

We have presented, to our knowledge, the first explicit algorithmic
framework for incremental maintenance of sheaf cohomology ($H^1$)
on dynamically evolving partitioned cellular complexes with bounded
local geometry. By exploiting the cellular decomposition, we reduce
the lazy streaming cost from $O(n^3)$ (full recomputation) to
$O(1)$ per edit with respect to $n$, with local eigensolves and
Mayer-Vietoris global assembly deferred to synchronization points.
At synchronization, the maintained state agrees with the batch
assembly of the partitioned sheaf model, yielding zero measured drift at synchronization points, confirmed by full
independent recomputation through $V = 5 \times 10^6$. An
algebraic-RAM barrier argument suggests that the partition
structure is necessary for non-trivial sheaves.

The combination of $O(1)$-in-$n$ lazy updates, zero drift at
synchronization, and streaming-from-zero construction makes
incremental sheaf cohomology a practical tool for continuous
consistency verification in knowledge graphs, document verification
pipelines, sensor networks, and regulatory compliance systems.
The lazy streaming path adds 35 $\mu$s per edit; exact
synchronization (eigensolve + assembly) costs are reported
separately in \Cref{sec:experiments}.

The open problems (higher cohomology, deletions,
a formal reduction-based lower bound,
incremental assembly certificates for $O(1)$ global updates
without full nerve traversal, and periodic algebraic resets
for numerical stability in infinite-stream
scenarios) represent natural extensions of the framework
established here.

\section*{Acknowledgments}

Portions of the manuscript were refined with AI assistance for
clarity and LaTeX formatting. The author takes full responsibility
for all content and mathematical claims.

\section*{Code and Data Availability}

The implementation described in this paper is available at
\url{https://github.com/Jasonleonardvolk/sigma}. Benchmark data
and reproduction scripts are included in the repository.

\bibliographystyle{plain}

\end{document}